\def\BibTeX{{\rm B\kern-.05em{\sc i\kern-.025em b}\kern-.08em
    T\kern-.1667em\lower.7ex\hbox{E}\kern-.125emX}}
\newtheorem{theorem}{Theorem}
\newtheorem{example}{Example}
\newtheorem{proposition}{Proposition}
\newtheorem{lemma}{Lemma}
\DeclareMathOperator*{\argmax}{argmax}
\begin{document}

%
%

\title{Transitions from P to NP-hardness: the case of the Linear Ordering Problem\\


\thanks{Identify applicable funding agency here. If none, delete this.}
}

\author{\IEEEauthorblockN{Anne Elorza}
\IEEEauthorblockA{\textit{Department of Computer Science and Artificial Intelligence} \\
\textit{University of the Basque Country UPV/EHU}\\
San Sebastian, Spain \\
anne.elorza@ehu.eus}
\and
\IEEEauthorblockN{Leticia Hernando}
\IEEEauthorblockA{\textit{Department of Mathematics} \\
\textit{University of the Basque Country UPV/EHU}\\
Leioa, Spain \\
leticia.hernando@ehu.eus}
\and
\IEEEauthorblockN{Jose A. Lozano}
\IEEEauthorblockA{\textit{Basque Center for Applied Mathematics (BCAM)} \\
Bilbao, Spain}
\IEEEauthorblockA{\textit{Department of Computer Science and Artificial Intelligence} \\
\textit{University of the Basque Country UPV/EHU}\\
San Sebastian, Spain \\
ja.lozano@ehu.eus}
}

\maketitle

%
%

\begin{abstract}
In this paper we evaluate how constructive heuristics degrade when a problem transits from P to NP-hard. This is done by means of the linear ordering problem. More specifically, for this problem we prove that the objective function can be expressed as the sum of two objective functions, one of which is associated with a P problem (an exact polynomial time algorithm is proposed to solve it), while the other is associated with an NP-hard problem. We study how different constructive algorithms whose behaviour only depends on univariate information perform depending on the contribution of the P or NP-hard components of the problem. A number of experiments are conducted with reduced dimensions, where the global optimum of the problems is known, giving different weights to the NP-hard component, while the weight of the P component is fixed. It is observed how the performance of the constructive algorithms gets worse as the weight given to the NP-hard component increases.
\end{abstract}

\begin{IEEEkeywords}
combinatorial optimization, permutations, linear ordering problem, NP-hardness, complexity transitions
\end{IEEEkeywords}

%
%

\section{Introduction}

The area of combinatorial optimization is composed of those optimization problems whose search space is finite or countably infinite. Among these problems, the most prominent ones are those that have been classified as NP-hard. For these problems, there is no known algorithm that can solve all of their instances in polynomial time on the size of the problem. However, an NP-hard problem, such as the Quadratic Assignment Problem \cite{commander2005survey}, can have particular cases that can be solved in polynomial time \cite{erdougan2011two}. It could also happen that, depending on the value of a given parameter that is used to define a problem instance, the problem could become more or less easy to be solved. This has been extensively studied in the area of parametric complexity. In this context, the concept of phase transition has been incorporated from the physical field to the field of combinatorial optimization. Originally, a phase transition is understood as a dramatic change of a property of a system when a given parameter crosses a specific threshold, such as what happens when the temperature of water falls below 0º and freezes. In the area of combinatorial optimization, computational complexity transitions have been studied, paying special attention to two famous problems: the traveling salesman problem and the satisfiability problem \cite{gent1996tsp,hernando2011study,hartmann2006phase,hogg1996phase,zhang1996study,biroli2002phase}. The study of phase transitions generally results in a deeper understanding of the problems.   

Our study is focused on an optimization problem based on permutations, the linear ordering problem (LOP), which has also been classified as NP-hard \cite{marti2011linear}. In this paper, we decompose the LOP into the sum of two components. This is carried out by  computing its Fourier transform (FT) over the symmetric group. Similarly to its real counterpart, which is widely known and decomposes a real function into an infinite sum of sines and cosines, the FT over permutations decomposes a permutation-based function in terms of a family of basis functions, which are named the irreducible representations of the symmetric group. The permutation-based FT has been applied in the area of combinatorial optimization where the Fourier coefficients of the fitness functions of three famous problems (the LOP, the Traveling Salesman Problem and the Quadratic Assignment Problem) have been characterized \cite{elorza2019taxonomization,kondor2010fourier}. We show how the Fourier characterization of the LOP naturally leads to a decomposition of its fitness function as a sum of two functions. One of them contains the whole univariate information of the problem, that is, the information about an element located at a certain position, independently of which exact position the rest of the elements occupy (this concept is explained in Section \ref{sec::decomposition_LOP}). The other function contains the rest of the information of the problem. It turns out that the first function is associated with a P problem, while the other is associated with an NP-hard problem. 

Taking this theoretical decomposition as a basis, we observe the evolution of the performance of a number of constructive algorithms when they are applied to instances as different weights are given to the P and NP-hard components. One of the constructive algorithms used is the well-known Becker's method \cite{becker1967helmstadtersche,laguna1999intensification}. In addition, we also design three new constructive techniques. It is worth noting  that  all  of  the four  constructive algorithms  that  we  are  using  employ univariate information into their machinery, which is the same type of information as the one that the P component of our decomposition contains.

The paper is organized as follows: Section \ref{sec::FT_LOP} outlines the mathematical background, defining the LOP and describing the Fourier characterization of the problem. Section \ref{sec::decomposition_LOP} shows the decomposition of the LOP into its P and NP-hard components. The constructive algorithms considered are presented in Section \ref{sec::constructives}, and the experimental study is undertaken in Section \ref{sec::experimental_study}. Finally, the conclusions and future work are drawn in Section \ref{sec::conclusions}.

%
%

\section{Fourier Transform of the LOP}\label{sec::FT_LOP}

%
%

The LOP is a problem that belongs to the field of permutation-based combinatorial optimization. Given a square matrix $A = [a_{ij}]$ of size $n$, the LOP consists of finding the joint permutations of rows and columns that maximize the sum of the upper-diagonal elements \cite{marti2011linear}. Then, the aim is to maximize the following function:
\begin{equation}\label{eq::LOP_function_definition}
f(\sigma) = \sum_{i=1}^{n-1} \sum_{j=i+1}^n\!\! a_{\sigma(i)\sigma(j)} \hspace{0.5cm} \text{ with } \sigma \in \Sigma_n,
\end{equation}
where $\Sigma_n$ is the set of permutations of size $n$ and $\sigma(k)$ denotes the number of the row/column of the original matrix which is located in the $k$-th position. We assume throughout the rest of the paper that the diagonal elements of $A$ are zero-valued.

The LOP can alternatively be seen in Fourier space through the Fourier coefficients of its objective functions. The FT over the symmetric group and the Fourier characterization of the LOP are briefly introduced in this section (more information can be found in \cite{huang2009fourier,elorza2019taxonomization,sagan2013symmetric,kondor2010fourier}).

\subsection{FT over the symmetric group}

 The FT over the symmetric group is a mathematical tool that decomposes a function $f:\Sigma_n\longrightarrow\mathbb{R}$ with respect to an orthogonal set of basis functions. In this case, the family of basis functions is the set of irreducible representations of the symmetric group, $\{\rho_\lambda: \lambda\vdash n\}$, where $\lambda$ denotes a partition of $n$ (a tuple that adds up to $n$). The Fourier coefficients are then indexed by the partitions of $n$ and can be computed according to the following formula:
$$\hat{f}_{\rho_\lambda} = \sum_{\sigma\in\Sigma_n} f(\sigma)\rho_\lambda(\sigma).$$

In what follows, the notation is simplified, denoting the Fourier coefficients as $\hat{f}_\lambda$. It is also known that there is a bijection between Fourier coefficients and permutation-based functions, which means that a function can be determined by the collection of its Fourier coefficients.

These coefficients have a highly intuitive interpretation when the function is a probability $p$. In this context, the lowest Fourier coefficient  satisfies $\hat{p}_{(n)} = 1$. On the other hand, $\hat{p}_{(n-1,1)}$ stores information of first order marginals (statements of the form $p(\sigma:\sigma(i) = j)$), while higher Fourier coefficients store information of higher order marginals when the effect of lower order marginals is subtracted. For example, $\hat{p}_{(n-2,2)}$ stores information about statements of the form $p(\sigma:\sigma(\{i,j\}) = \{k,l\})$ when the effect of marginals of order one is subtracted. This means that, if one knows both $\hat{p}_{(n-1,1)}$ and $\hat{p}_{(n-2,2)}$, then one can compute $p(\sigma:\sigma(\{i,j\}) = \{k,l\})$ for any $i,j,k,l$. A detailed explanation of this interpretation can be found in \cite{huang2009fourier}.

\subsection{FT of the LOP}

Given a permutation-based combinatorial optimization problem, one could analyse the shape of the Fourier coefficients of the objective functions associated with the problem. In the case of the LOP, this study has been carried out in \cite{elorza2019taxonomization}, where the following theorem has been proved:

\begin{theorem}\label{theo::LOPcoeff}

If $f:\Sigma_n\longrightarrow \mathbb{R}$ is the objective function of an LOP instance, that is, $f$ is expressed as in (\ref{eq::LOP_function_definition}), then its FT has the following properties:
\begin{enumerate}
\item $\hat{f}_\lambda = 0$, if $\lambda\ne (n), (n-1,1), (n-2,1,1)$.
\item $\hat{f}_\lambda$ has at most rank one for $\lambda = (n-1,1), (n-2,1,1)$. Having rank one is equivalent to the fact that the matrix columns are proportional. 
\item For $\lambda = (n-1,1), (n-2,1,1)$ and a fixed dimension $n$, the proportions among the columns of $\hat{f}_\lambda$ are the same for all the instances.
\end{enumerate}
\end{theorem}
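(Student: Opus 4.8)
The plan is to exploit the linearity of the Fourier transform together with the fact that each summand of $f$ in (\ref{eq::LOP_function_definition}) depends only on the relative order of a pair of elements. Writing $g_{kl}(\sigma)=1$ if element $k$ precedes element $l$ in $\sigma$ and $0$ otherwise, one has $f(\sigma)=\sum_{k\ne l}a_{kl}\,g_{kl}(\sigma)$, so by linearity $\hat f_\lambda=\sum_{k\ne l}a_{kl}\,\widehat{(g_{kl})}_\lambda$ and it suffices to control the transform of a single precedence function. The key first move is the identity $g_{kl}+g_{lk}=1$: it lets me split $g_{kl}=\tfrac12(1+s_{kl})$ with $s_{kl}:=g_{kl}-g_{lk}$ the antisymmetric sign function ($s_{kl}=-s_{lk}$). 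The constant contributes only to $\hat f_{(n)}$ (the trivial representation), so the whole problem reduces to analysing the subspace $W:=\mathrm{span}\{s_{kl}:k\ne l\}$ of functions on $\Sigma_n$.

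Next I would identify $W$ as a group representation. A direct check shows that left translation relabels elements, $(L_\pi s_{kl})(\sigma)=s_{kl}(\pi^{-1}\sigma)=s_{\pi(k)\pi(l)}(\sigma)$, so $W$ is invariant under the left regular action and is the equivariant image of the module $A$ of \emph{antisymmetric} functions on ordered pairs of distinct elements (with basis $e_{kl}-e_{lk}$), via $e_{kl}-e_{lk}\mapsto s_{kl}$. The crucial structural input is the decomposition of $A$ into irreducibles. Young's rule gives the permutation modules $M^{(n-2,1,1)}\cong S^{(n)}\oplus 2S^{(n-1,1)}\oplus S^{(n-2,2)}\oplus S^{(n-2,1,1)}$ (ordered pairs) and $M^{(n-2,2)}\cong S^{(n)}\oplus S^{(n-1,1)}\oplus S^{(n-2,2)}$ (unordered pairs), where $S^\lambda$ denotes the Specht module. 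Since the swap-symmetric part of $M^{(n-2,1,1)}$ is exactly $M^{(n-2,2)}$, its antisymmetric complement is $A\cong S^{(n-1,1)}\oplus S^{(n-2,1,1)}$, each occurring with multiplicity one. As an equivariant image, $W$ is a quotient of $A$, so its irreducible constituents lie in $\{S^{(n-1,1)},S^{(n-2,1,1)}\}$, each with multiplicity at most one.

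From here the three claims follow. For part 1, recall (see \cite{huang2009fourier,sagan2013symmetric}) that $\hat f_\lambda\ne 0$ forces $S^\lambda$ to be a constituent of the module generated by $f$ under left translation; since that module sits in $W\oplus\langle\text{constants}\rangle$, the only admissible partitions are $(n)$, $(n-1,1)$ and $(n-2,1,1)$, and in particular $(n-2,2)$ is excluded precisely because it lives entirely in the discarded swap-symmetric part $M^{(n-2,2)}$. For part 2, I would use the standard bimodule fact that $\mathrm{rank}(\hat f_\lambda)$ equals the multiplicity of $S^\lambda$ in that translation module; multiplicity at most one therefore forces rank at most one, which is exactly the stated condition that the columns of $\hat f_\lambda$ be proportional. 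For part 3, I would combine transitivity of $\Sigma_n$ on ordered pairs with the translation rule $\widehat{(L_\pi s_{kl})}_\lambda=\rho_\lambda(\pi)\,\widehat{(s_{kl})}_\lambda$: writing each rank-one coefficient as $\widehat{(s_{kl})}_\lambda=u_{kl}v_{kl}^{\top}$ and using that a rank-one factorization is unique up to scalars, the relation $u_{\pi(k)\pi(l)}v_{\pi(k)\pi(l)}^{\top}=\rho_\lambda(\pi)\,u_{kl}v_{kl}^{\top}$ forces all the $v_{kl}$ to be scalar multiples of a single vector $v_\lambda$ depending only on $n$. Hence $\hat f_\lambda=\big(\sum_{k\ne l}a_{kl}\,c_{kl}u_{kl}\big)\,v_\lambda^{\top}$, so the ratios between its columns are governed by $v_\lambda$ alone and are identical for every instance, while the entire dependence on the matrix $A$ is absorbed into the left factor.

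The main obstacle is the representation-theoretic core: establishing $A\cong S^{(n-1,1)}\oplus S^{(n-2,1,1)}$ with multiplicity one (equivalently, that the sign part of the ordered-pair module omits $S^{(n-2,2)}$) and correctly invoking the correspondences between nonzero Fourier coefficients, rank, and translation-module multiplicities under the paper's convention $\hat f_\lambda=\sum_{\sigma}f(\sigma)\rho_\lambda(\sigma)$. Once these standard facts about $\Sigma_n$ are in place, the remaining steps are formal bookkeeping.
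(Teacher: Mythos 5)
Your proposal cannot be checked against an in-paper argument, because this paper does not actually prove Theorem~\ref{theo::LOPcoeff}: the theorem is imported from \cite{elorza2019taxonomization}, where the characterization is obtained by explicitly working out the coefficient matrices of the LOP objective (which is also what produces the concrete, instance-independent column proportions asserted in claim~3). Your argument is a genuinely different, structural route, and as far as I can tell it is correct. The decomposition $f=\tfrac12\sum_{k\ne l}a_{kl}+\tfrac12\sum_{k\ne l}a_{kl}s_{kl}$ is valid under the paper's convention (here $a_{kl}$ multiplies the indicator that $k$ precedes $l$), the span $W$ of the $s_{kl}$ is indeed an equivariant quotient of the antisymmetric part of the ordered-pair module, and the Young-rule bookkeeping $M^{(n-2,1,1)}\cong S^{(n)}\oplus 2S^{(n-1,1)}\oplus S^{(n-2,2)}\oplus S^{(n-2,1,1)}$, with swap-symmetric part $M^{(n-2,2)}\cong S^{(n)}\oplus S^{(n-1,1)}\oplus S^{(n-2,2)}$, correctly yields $W\subseteq$ a module with constituents $S^{(n-1,1)}\oplus S^{(n-2,1,1)}$, each of multiplicity one; this gives claim~1 (and cleanly explains why $(n-2,2)$ is absent), claim~2 via the standard rank-equals-multiplicity fact, and claim~3 via your factorization argument, since with $\hat f_\lambda=\sum_\sigma f(\sigma)\rho_\lambda(\sigma)$ one indeed has $\widehat{(L_\pi f)}_\lambda=\rho_\lambda(\pi)\hat f_\lambda$ so that transitivity on ordered pairs pins down a common right factor $v_\lambda$. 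Two small points you should make explicit to be airtight: the degenerate case $\widehat{(s_{kl})}_\lambda=0$ (by transitivity all such coefficients then vanish simultaneously, so claims 2--3 hold vacuously), and the left-versus-right translation convention, which your identity $L_\pi s_{kl}=s_{\pi(k)\pi(l)}$ settles but which is exactly the kind of place such proofs silently break. In terms of trade-offs: your approach is self-contained and conceptual, showing the rank-one and instance-independence phenomena are forced by multiplicity-one representation theory rather than coincidences of computation; the explicit route of \cite{elorza2019taxonomization} instead delivers the actual coefficient matrices and proportionality constants, which is the form of the result that Section~\ref{sec::decomposition_LOP} of this paper implicitly builds on.
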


This means that an objective function $f$ of the LOP has, at most, three non-zero Fourier coefficients: $\hat{f}_{(n)}$, $\hat{f}_{(n-1,1)}$ and $\hat{f}_{(n-2,1,1)}$ . The effect of the first coefficient in terms of optimization is irrelevant, because it is related to the mean value of $f$, but, when one dives into a deeper analysis of the other two components, a couple of interesting properties come to light.

%
%

\section{Decomposition of the LOP}\label{sec::decomposition_LOP}

This section is devoted to proving that an objective function of the LOP can be decomposed into the sum of two objective functions, one of which is associated with a P problem, while the other is associated with an NP-hard problem. For ease of reading, most of the mathematical proofs are developed in Appendix \ref{app:proofs_decomposition_LOP}, where any interested reader may refer to gain a deeper insight into the matter. In the rest of the paper, we may refer to an objective function of an LOP instance simply as LOP function.


\subsection{P component of the LOP}

Let us focus on the case where an LOP function $f$ satisfies $\hat{f}_{(n-2,1,1)} = 0$. Intuitively, if $f$ is a probability $p$, and only coefficients $(n-1,1)$ and $(n)$ are non-zero, then, $f$ can be reconstructed through marginals of order 1 (as explained in \cite{huang2009fourier}). Note that this expression belongs to the area of probability. If a marginal of order 1 in this area is a statement of the form 
$$p(\sigma:\sigma(i) = j) = \sum_{\sigma:\sigma(i) = j} p(\sigma),$$
then, for a general function $f$, the univariate information can be translated as: $$\sum_{\sigma:\sigma(i)=j} f(\sigma).$$ Thus, a function $f$ with only the two lowest Fourier coefficients, $\hat{f}_{(n-1,1)}$ and $\hat{f}_{(n)}$, different from zero, can be reconstructed through this univariate information. Given a permutation $\sigma = [i_1\ i_2\ \cdots i_n]\in\Sigma_n$, this idea can be expressed according to the following formula:

\begin{align}\label{eq::function_only_2lowestCoefficients}
& f([i_1\ i_2\ \cdots\ i_n]) = \frac{1}{n!} \Bigg( \sum\limits_{\sigma:\sigma(1) = i_1} f(\sigma) + \sum\limits_{\sigma:\sigma(2) = i_2} f(\sigma) + \nonumber \\
&\cdots + \sum\limits_{\sigma:\sigma(n) = i_n} f(\sigma) \Bigg).
\end{align}

Based on this equation, we can characterize the input matrix of an LOP function $f$ such that $\hat{f}_{(n-2,1,1)} = 0$.

\begin{proposition}\label{prop::LOP_necessaryConditionA_n-2_1_1_zero}

The objective function $f:\Sigma_n\longrightarrow\mathbb{R}$ of an LOP instance with an input matrix $A$ satisfies $\hat{f}_{(n-2,1,1)} = 0$ if and only if 
$$ a_{ij} - a_{ji} + a_{jk} - a_{kj} = a_{ik} - a_{ki} \qquad\forall i,j,k=1,2,\ldots,n.$$
\end{proposition}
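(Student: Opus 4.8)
The plan is to reduce both the Fourier condition and the matrix condition to a single structural statement about the antisymmetric part of $A$, namely $b_{ij} := a_{ij} - a_{ji}$. First I would record two reformulations. On the analytic side, I would rewrite the LOP objective in terms of $b$: since $\sum_{i<j} a_{\sigma(i)\sigma(j)} + \sum_{i<j} a_{\sigma(j)\sigma(i)} = \sum_{i \ne j} a_{ij}$ is a constant independent of $\sigma$, one obtains
\[
f(\sigma) = \tfrac{1}{2}\sum_{i\ne j} a_{ij} + \tfrac{1}{2}\sum_{i<j} b_{\sigma(i)\sigma(j)},
\]
so $f$ depends on $A$ only through $b$, up to an additive constant. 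On the combinatorial side, I would observe that the matrix condition $a_{ij}-a_{ji}+a_{jk}-a_{kj}=a_{ik}-a_{ki}$ is exactly the cocycle identity $b_{ij}+b_{jk}=b_{ik}$, which (using antisymmetry $b_{ij}=-b_{ji}$) is equivalent to the existence of a vector $c=(c_1,\dots,c_n)$ with $b_{ij}=c_i-c_j$ for all $i,j$ — take $c_i=b_{i1}$ and read off the identity with $k=1$.

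The second ingredient is the characterization already made available in the excerpt: for an LOP function, Theorem \ref{theo::LOPcoeff} forces every Fourier coefficient outside $\{(n),(n-1,1),(n-2,1,1)\}$ to vanish, so $\hat f_{(n-2,1,1)}=0$ holds if and only if $f$ is supported on $\{(n),(n-1,1)\}$, which by the reconstruction formula (\ref{eq::function_only_2lowestCoefficients}) is in turn equivalent to $f$ being \emph{univariate}, i.e. $f(\sigma)=\sum_{k=1}^n M_{k,\sigma(k)}$ for some $n\times n$ matrix $M$ (indeed the right-hand side of (\ref{eq::function_only_2lowestCoefficients}) is manifestly of this form, each summand depending only on the pair $(k,i_k)$). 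Thus the whole proposition collapses to the single equivalence: the function $\sigma\mapsto\sum_{i<j} b_{\sigma(i)\sigma(j)}$ is univariate if and only if $b_{ij}=c_i-c_j$.

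For the forward implication I would substitute $b_{ij}=c_i-c_j$ and count multiplicities: the element sitting at position $m$ contributes $c_{\sigma(m)}$ once for each later position and $-c_{\sigma(m)}$ once for each earlier position, giving $\sum_{i<j} b_{\sigma(i)\sigma(j)}=\sum_m (n-2m+1)\,c_{\sigma(m)}$, which is univariate; hence $\hat f_{(n-2,1,1)}=0$. The reverse implication is the step I expect to be the main obstacle, and the tool I would use is the adjacent transposition. If $F(\sigma):=\sum_{i<j}b_{\sigma(i)\sigma(j)}$ is univariate, say $F(\sigma)=\sum_m N_{m,\sigma(m)}$, then swapping the entries $r,s$ held at two consecutive positions $p,p+1$ changes the relative order of the single pair $(r,s)$ and of no other pair, so $F$ changes by $b_{rs}-b_{sr}=2b_{rs}$; comparing with the univariate expression gives $2b_{rs}=(N_{p,r}-N_{p+1,r})-(N_{p,s}-N_{p+1,s})$. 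Fixing $p$ and setting $c_r:=\tfrac12(N_{p,r}-N_{p+1,r})$ yields $b_{rs}=c_r-c_s$, and the cocycle identity $b_{ij}+b_{jk}=b_{ik}$ follows immediately, closing the equivalence. The two delicate points to verify carefully are that an adjacent swap really leaves every other pairwise order unchanged (so that only the $(r,s)$ term flips sign), and that the ``support $\subseteq\{(n),(n-1,1)\}$ $\Leftrightarrow$ univariate'' passage is applied correctly through (\ref{eq::function_only_2lowestCoefficients}).
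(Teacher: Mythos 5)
Your proof takes a genuinely different route from the paper's, and it is in fact more complete. The paper's proof evaluates $f$ at the two permutations $[i\ j\ 1\ 2\ \cdots]$ and $[j\ i\ 1\ 2\ \cdots]$, applies the reconstruction formula (\ref{eq::function_only_2lowestCoefficients}) to write $a_{ij}-a_{ji}$ as a combination of first-order marginal sums, and lets those sums telescope; this establishes only the implication $\hat f_{(n-2,1,1)}=0\Rightarrow$ matrix condition, and the converse is never argued there. You instead reduce everything to the antisymmetric part $b_{ij}=a_{ij}-a_{ji}$, identify the matrix condition with the potential form $b_{ij}=c_i-c_j$, and handle both implications: the multiplicity count $\sum_{i<j}b_{\sigma(i)\sigma(j)}=\sum_m(n-2m+1)\,c_{\sigma(m)}$ in one direction, and the adjacent-transposition comparison $2b_{rs}=(N_{p,r}-N_{p+1,r})-(N_{p,s}-N_{p+1,s})$ in the other. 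I checked both computations (an adjacent swap indeed flips only the one pair, so the $2b_{rs}$ increment is correct), and the cocycle-versus-potential step is also fine.

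Two caveats. First, the one step you assert rather than prove is exactly what your ``if'' direction hinges on: that a univariate function $\sigma\mapsto\sum_k M_{k,\sigma(k)}$ satisfies $\hat f_{(n-2,1,1)}=0$. You present ``Fourier support in $\{(n),(n-1,1)\}$ $\Leftrightarrow$ univariate'' as following from (\ref{eq::function_only_2lowestCoefficients}), but that formula (even granting it) yields only ``support $\Rightarrow$ univariate''; the converse is a separate fact, namely that the indicators $\mathbf{1}_{\{\sigma(i)=j\}}$ are matrix entries of the defining representation, which decomposes as $(n)\oplus(n-1,1)$, so by Schur orthogonality their FT vanishes at every other irreducible. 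This is standard and available in \cite{huang2009fourier}, so the gap is a missing invocation rather than a wrong idea, but as written the direction ``matrix condition $\Rightarrow\hat f_{(n-2,1,1)}=0$'' is not closed. Second, do not lean on (\ref{eq::function_only_2lowestCoefficients}) literally, because as printed it is false: for $n=3$ and $f=\mathbf{1}_{\{\sigma(1)=1\}}$, whose only nonzero coefficients are at $(3)$ and $(2,1)$, its right-hand side at $[1\ 2\ 3]$ equals $2/3$ while $f([1\ 2\ 3])=1$. The correct reconstruction is
\begin{equation*}
f(\pi)=\frac{n-1}{n!}\sum_{k=1}^n\ \sum_{\sigma:\sigma(k)=\pi(k)}f(\sigma)\ -\ \frac{n-2}{n!}\sum_{\sigma\in\Sigma_n}f(\sigma).
\end{equation*}
This error turns out to be harmless both for the paper's proof (only differences of the formula are used there, so the missing mean term and the wrong constant cancel) and for yours (the corrected right-hand side is still manifestly univariate in $\pi$), but your write-up should quote the corrected version rather than (\ref{eq::function_only_2lowestCoefficients}) as stated.
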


Let us consider the LOP functions $f$ such that $\hat{f}_{(n-2,1,1)} = 0$. If one considers swaps among the elements of permutations, this type of functions have a property that cannot be generalized to any LOP function. We illustrate this with an example.

\begin{example}
With a fixed dimension of $n=4$, we compute the difference between objective function values after applying a swap of the elements 1 and 2. We start with 3 examples. The first one shows what happens when we perform an adjacent swap:
$$f([\mathbf{1}\ \mathbf{2}\ 3\ 4]) - f([\mathbf{2}\ \mathbf{1}\ 3\ 4]) = a_{12}-a_{21}.$$

Secondly, we perform a swap where there is only one element between the elements 1 and 2.
$$f([\mathbf{1}\ 3\ \mathbf{2}\ 4]) - f([\mathbf{2}\ 3\ \mathbf{1}\ 4]) = a_{12}-a_{21} + a_{13}-a_{31} + a_{32}-a_{23}. $$

According to Proposition \ref{prop::LOP_necessaryConditionA_n-2_1_1_zero}, $a_{13}-a_{31} + a_{32}-a_{23}=a_{12}-a_{21}$. Then,
$$f([\mathbf{1}\ 3\ \mathbf{2}\ 4]) - f([\mathbf{2}\ 3\ \mathbf{1}\ 4]) = a_{12}-a_{21} + a_{12}-a_{21} = 2(a_{12}-a_{21}). $$

Thirdly, we perform a swap where there are 2 elements between elements 1 and 2:
\begin{align*}
& f([\mathbf{1}\ 3\ 4\ \mathbf{2}]) - f([\mathbf{2}\ 3\ 4\ \mathbf{1}]) = a_{13}-a_{31} + a_{14}-a_{41} + \\& a_{12}-a_{21} + a_{32}-a_{23} + a_{42}-a_{24}.
\end{align*}

We take the following equations into account:
\begin{align*}
& a_{13}-a_{31}  + a_{32}-a_{23} = a_{12}-a_{21} \\
& a_{14}-a_{41} + a_{42}-a_{24} = a_{12}-a_{21} .
\end{align*}

Then,
\begin{align*}
& f([\mathbf{1}\ 3\ 4\ \mathbf{2}]) - f([\mathbf{2}\ 3\ 4\ \mathbf{1}]) = a_{12}-a_{21} + a_{12}-a_{21} + \\& a_{12}-a_{21} = 3(a_{12}-a_{21}). 
\end{align*}

In general, the following can be proved:
\begin{align*}
 f([\mathbf{1}\ \mathbf{2}\ 3\ 4]) - f([\mathbf{2}\ \mathbf{1}\ 3\ 4]) &= a_{12}-a_{21}.\\
 f([\mathbf{1}\ \mathbf{2}\ 4\ 3]) - f([\mathbf{2}\ \mathbf{1}\ 4\ 3]) &= a_{12}-a_{21}.\\
 f([3\ \mathbf{1}\ \mathbf{2}\ 4]) - f([3\ \mathbf{2}\ \mathbf{1}\ 4]) &= a_{12}-a_{21}.\\
 f([4\ \mathbf{1}\ \mathbf{2}\ 3]) - f([4\ \mathbf{2}\ \mathbf{1}\ 3]) &= a_{12}-a_{21}.\\
 f([3\ 4\ \mathbf{1}\ \mathbf{2}]) - f([3\ 4\ \mathbf{2}\ \mathbf{1}]) &= a_{12}-a_{21}.\\
 f([4\ 3\ \mathbf{1}\ \mathbf{2}]) - f([4\ 3\ \mathbf{2}\ \mathbf{1}]) &= a_{12}-a_{21}.\\[4mm]
 f([\mathbf{1}\ 3\ \mathbf{2}\ 4]) - f([\mathbf{2}\ 3\ \mathbf{1}\ 4]) &= 2(a_{12}-a_{21}).\\
 f([\mathbf{1}\ 4\ \mathbf{2}\ 3]) - f([\mathbf{2}\ 4\ \mathbf{1}\ 3]) &= 2(a_{12}-a_{21}).\\
 f([3\ \mathbf{1}\ 4\ \mathbf{2}]) - f([3\ \mathbf{2}\ 4\ \mathbf{1}]) &= 2(a_{12}-a_{21}).\\
 f([4\ \mathbf{1}\ 3\ \mathbf{2}]) - f([4\ \mathbf{2}\ 3\ \mathbf{1}]) &= 2(a_{12}-a_{21}).\\[4mm]
 f([\mathbf{1}\ 3\ 4\ \mathbf{2}]) - f([\mathbf{2}\ 3\ 4\ \mathbf{1}]) &= 3(a_{12}-a_{21}).\\
 f([\mathbf{1}\ 4\ 3\ \mathbf{2}]) - f([\mathbf{2}\ 4\ 3\ \mathbf{1}]) &= 3(a_{12}-a_{21}).
\end{align*}

\end{example}

We generalize the results of this example to any dimension $n$ and the swap between two arbitrary elements $i$ and $j$.

\begin{proposition}\label{prop:LOP_n-2_1_1_zero_swap}
Given the objective function of an LOP instance $f$ such that $\hat{f}_{(n-2,1,1)} = 0$ and a transposition $\sigma_{ij} = (i\ j)$ with $i<j$, then
$$ f(\sigma) - f(\sigma_{ij}\circ\sigma) = (j-i)(a_{\sigma(i)\sigma(j)} - a_{\sigma(j)\sigma(i)}).$$
\end{proposition}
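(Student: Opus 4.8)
The plan is to compute $f(\sigma) - f(\sigma_{ij}\circ\sigma)$ directly from the definition (\ref{eq::LOP_function_definition}), tracking precisely which summands are altered by the swap, and then to collapse the resulting expression using Proposition \ref{prop::LOP_necessaryConditionA_n-2_1_1_zero}. Write $p = \sigma(i)$ and $q = \sigma(j)$ for the two exchanged elements and set $\sigma' = \sigma_{ij}\circ\sigma$, the permutation obtained from $\sigma$ by exchanging the entries at positions $i$ and $j$, so that $\sigma'$ agrees with $\sigma$ everywhere except that $\sigma'(i) = q$ and $\sigma'(j) = p$. Since $f(\sigma) = \sum_{k<l} a_{\sigma(k)\sigma(l)}$ ranges over pairs of positions, every term whose two positions both avoid $\{i,j\}$ is identical in $f(\sigma)$ and $f(\sigma')$ and cancels; the only surviving contributions come from pairs that involve position $i$ or position $j$.

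First I would isolate these surviving terms and group them by the remaining position $m\notin\{i,j\}$, writing $r = \sigma(m) = \sigma'(m)$. A brief case analysis on where $m$ sits relative to the interval $(i,j)$ handles the bookkeeping: for $m<i$ the affected pairs are $(m,i)$ and $(m,j)$, for $m>j$ they are $(i,m)$ and $(j,m)$, and in both situations the two contributions cancel, being $(a_{rp}-a_{rq})+(a_{rq}-a_{rp})$ and its transpose. Thus only two kinds of contribution remain: the pair $(i,j)$ itself, giving $a_{pq}-a_{qp}$, and each intermediate position $m$ with $i<m<j$, giving $(a_{pr}-a_{rp})+(a_{rq}-a_{qr})$.

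The key step is to observe that each intermediate contribution is precisely an instance of the identity in Proposition \ref{prop::LOP_necessaryConditionA_n-2_1_1_zero}: applying it to the triple $(p,r,q)$ yields $a_{pr}-a_{rp}+a_{rq}-a_{qr} = a_{pq}-a_{qp}$, so each of the $j-i-1$ intermediate positions contributes the same value $a_{pq}-a_{qp}$ as the pair $(i,j)$. Summing the $1+(j-i-1)=j-i$ equal contributions then gives $f(\sigma)-f(\sigma') = (j-i)(a_{pq}-a_{qp}) = (j-i)(a_{\sigma(i)\sigma(j)}-a_{\sigma(j)\sigma(i)})$, as claimed. I expect the only real obstacle to be organizational rather than conceptual: keeping the index subscripts straight through the case analysis so that the external positions genuinely cancel and the intermediate terms land in exactly the form that Proposition \ref{prop::LOP_necessaryConditionA_n-2_1_1_zero} is stated in. Once the terms are collected correctly, the proposition does all the work, converting a gap-dependent sum into the clean multiplier $j-i$ --- which is also what the position-distance coefficients in the preceding example were signalling.
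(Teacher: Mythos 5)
Your proof is correct and takes essentially the same route as the paper's: expand $f(\sigma)-f(\sigma_{ij}\circ\sigma)$ directly, note that only pairs of positions meeting $\{i,j\}$ survive (positions outside the interval cancelling), and apply Proposition \ref{prop::LOP_necessaryConditionA_n-2_1_1_zero} once per intermediate position to turn the $j-i-1$ intermediate contributions plus the $(i,j)$ pair into $(j-i)(a_{\sigma(i)\sigma(j)}-a_{\sigma(j)\sigma(i)})$. If anything your bookkeeping is slightly cleaner, since you invoke the proposition exactly in its stated difference form $a_{pr}-a_{rp}+a_{rq}-a_{qr}=a_{pq}-a_{qp}$, whereas the paper splits positive and negative terms into separate blocks and informally cites sum identities of the form $a_{\sigma(i)\sigma(k)}+a_{\sigma(k)\sigma(j)}=a_{\sigma(i)\sigma(j)}$, which are not literally consequences of the proposition and only become valid once the two blocks are recombined.
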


This property has an interesting consequence. If $\sigma_{max}$ is a global maximum and $a_{ij} - a_{ji} > 0$ for two indices $i$ and $j$, then, element $i$ must be before $j$ in the configuration of $\sigma_{max}$. Otherwise, if we applied a swap between elements $i$ and $j$, we would obtain a solution better than $\sigma_{max}$. This implies that $\sigma_{max}$ (the global maximum) can be constructed just by checking whether $a_{ij} - a_{ji} > 0$. In doing so, we can know, for each pair of elements $i$ and $j$, which goes before the other in the configuration of $\sigma_{max}$. These comparisons can be done in polynomial time. Therefore, finding the optimum $\sigma_{max}$ in this case is a P problem, as shown by Algorithm \ref{Alg:solveLOP_n-2_1_1Zero_polynomial}. \\  

\begin{algorithm}
\hspace*{\algorithmicindent} \textbf{Input:} $A$, input matrix of the LOP of dimension $n$ such that $a_{ij}-a_{ji} + a_{jk}-a_{kj} = a_{ik}-a_{ki}\ \forall i,j,k = 1,2,\cdots n$. \\
\hspace*{\algorithmicindent} \textbf{Output:} $\sigma_{max}$.

\begin{algorithmic}[]
\caption{Polynomial algorithm that solves the LOP if $\hat{f}_{(n-2,1,1)} = 0$}\label{Alg:solveLOP_n-2_1_1Zero_polynomial}
\IF {$a_{12} - a_{21} > 0$}
	\STATE $\sigma_{max} = [1\ 2]$
\ELSE
	\STATE $\sigma_{max} = [2\ 1]$
\ENDIF
\FOR{$i = 3,4, \cdots, n$}
	\STATE $isFound = false$
	\FOR{$j = 1,2, \cdots, length(\sigma_{max})$}
		\IF {$a_{\sigma_{max}(j)i} - a_{i\sigma_{max}(j)} < 0$}
			\STATE insert element $i$ at position $j$ in $\sigma_{max}$
			\STATE $isFound = true$
			\STATE \textbf{break}
		\ENDIF
	\ENDFOR
	\IF {$isFound = false$}
		\STATE append element $i$ to $\sigma_{max}$
	\ENDIF
\ENDFOR
\end{algorithmic}
\end{algorithm}


\subsection{NP-hard component of the LOP}

The aim of this part is to show that an LOP with an objective function $f$ such that $\hat{f}_{(n-1,1)} = 0$ is equivalent to an LOP of dimension $n-1$.

If the function is a probability, that is, $f = p$, $\hat{p}_{(n-1,1)}$ encodes the information regarding order 1 marginals. If $\hat{p}_{(n-1,1)} = 0$, it means that this marginals have no relevant information, that is, that they are uniform distributions. If $f$ is a general function, not necessarily a probability, this idea can be translated in the following way:
\begin{equation}\label{eq::f_st_n-1_1_coeff0}
\sum\limits_{\sigma:\sigma(i) = j} f(\sigma) = \frac{1}{n}\sum\limits_{\sigma\in\Sigma_n} f(\sigma),
\end{equation}
for any $i,j=1,2,\cdots, n$.

Taking this information into account, Proposition \ref{prop::LOP_necessaryConditionA_n-1_1_zero} shows how having coefficient $(n-1,1)$ equal to 0 affects the shape of the elements of the input matrix.

\begin{proposition}\label{prop::LOP_necessaryConditionA_n-1_1_zero}

The objective function $f:\Sigma_n\longrightarrow\mathbb{R}$ of an LOP instance with an input matrix $A$ satisfies $\hat{f}_{(n-1,1)} = 0$ if and only if
$$\sum\limits_{j=1}^n \left(a_{ij}-a_{ji}\right) = 0,\ \forall i=1,\cdots n.$$
\end{proposition}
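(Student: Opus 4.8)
The plan is to take as the starting point the translation of the condition given in \eqref{eq::f_st_n-1_1_coeff0}: the coefficient $\hat{f}_{(n-1,1)}$ vanishes if and only if the univariate sums $S_{ij} := \sum_{\sigma:\sigma(i)=j} f(\sigma)$ all equal $\frac1n\sum_{\sigma\in\Sigma_n}f(\sigma)$, i.e.\ $S_{ij}$ does not depend on the pair $(i,j)$. (Since $\sum_j S_{ij}=\sum_\sigma f(\sigma)=\sum_i S_{ij}$ holds automatically, constancy in $(i,j)$ is the only real content.) Hence the whole proposition reduces to computing $S_{ij}$ explicitly for the LOP objective \eqref{eq::LOP_function_definition} and reading off exactly when it is constant.

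First I would exchange the order of summation, writing $S_{ij}=\sum_{k<l}\sum_{\sigma:\sigma(i)=j} a_{\sigma(k)\sigma(l)}$, and split the position-pairs $(k,l)$ with $k<l$ into three groups: those with $k=i$ (there are $n-i$ of them), those with $l=i$ (there are $i-1$), and those avoiding $i$ entirely (there are $\binom{n-1}{2}$). In the first group $\sigma(k)=j$ is pinned while $\sigma(l)$ ranges over the remaining elements, each value realised by $(n-2)!$ permutations, contributing $(n-2)!\sum_{q\ne j}a_{jq}$; the second group symmetrically contributes $(n-2)!\sum_{p\ne j}a_{pj}$; the third pins only $\sigma(i)=j$, and each ordered pair of distinct elements different from $j$ is realised by $(n-3)!$ permutations, contributing $(n-3)!\sum_{p,q\ne j,\,p\ne q}a_{pq}$. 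Writing $R_j=\sum_{q\ne j}a_{jq}$, $C_j=\sum_{p\ne j}a_{pj}$ and $\Sigma=\sum_{p\ne q}a_{pq}$, using $\sum_{p,q\ne j,\,p\ne q}a_{pq}=\Sigma-R_j-C_j$ together with $\binom{n-1}{2}(n-3)!=\tfrac{(n-1)!}{2}$, I would collect terms to obtain
$$S_{ij}=(n-2)!\Big[\tfrac{n-1}{2}\,\Sigma+\tfrac{n+1-2i}{2}\,(R_j-C_j)\Big].$$

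The payoff is that $R_j-C_j=\sum_{q=1}^n (a_{jq}-a_{qj})$ is exactly the quantity appearing in the statement (the $q=j$ term vanishes), while the first summand is a genuine constant independent of $(i,j)$. Thus $S_{ij}$ is constant if and only if the term $\tfrac{n+1-2i}{2}\sum_{q}(a_{jq}-a_{qj})$ is. The $(\Leftarrow)$ direction is then immediate: if $\sum_{q}(a_{jq}-a_{qj})=0$ for all $j$, the variable part drops out and $S_{ij}$ is manifestly constant, so $\hat{f}_{(n-1,1)}=0$. For $(\Rightarrow)$, the coefficient $\tfrac{n+1-2i}{2}$ already takes two different values as $i$ runs over $\{1,2\}$ (their difference is $1\ne0$), so subtracting gives $S_{1j}-S_{2j}=(n-2)!\sum_{q}(a_{jq}-a_{qj})$; constancy forces this difference to be zero, hence $\sum_{q}(a_{jq}-a_{qj})=0$ for every $j$. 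I expect the only delicate part to be the bookkeeping in the casework — pinning down the three multiplicities $(n-2)!,(n-2)!,(n-3)!$ and the three pair-counts — together with a quick check that the degenerate small cases ($n=2,3$, where $\binom{n-1}{2}$ or $(n-3)!$ behave specially) do not break the argument, which they do not, since an empty third group simply contributes nothing when $\binom{n-1}{2}=0$.
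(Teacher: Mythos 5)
Your proof is correct, and it rests on the same foundation as the paper's own argument: take equation (\ref{eq::f_st_n-1_1_coeff0}) as the working formulation of $\hat{f}_{(n-1,1)}=0$ and evaluate first-order marginals of the LOP objective by counting permutations. The difference is in scope and completeness. The paper computes only two marginals, $\sum_{\sigma:\sigma(1)=i}f(\sigma)$ and $\sum_{\sigma:\sigma(n)=i}f(\sigma)$, observes that the terms avoiding element $i$ coincide, and deduces the row/column condition --- which, as written, is only the forward implication. Your closed form
$$S_{ij}=(n-2)!\left[\tfrac{n-1}{2}\,\Sigma+\tfrac{n+1-2i}{2}\,(R_j-C_j)\right],$$
whose bookkeeping (the pair counts $n-i$, $i-1$, $\binom{n-1}{2}$ and the multiplicities $(n-2)!$, $(n-2)!$, $(n-3)!$) checks out, e.g.\ against $n=3$ by hand, subsumes the paper's two special cases $i=1$ and $i=n$ and, more importantly, delivers the converse explicitly: if $R_j=C_j$ for all $j$, then every marginal equals the constant $\tfrac{(n-1)!}{2}\Sigma$, which is exactly what equation (\ref{eq::f_st_n-1_1_coeff0}) demands. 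That direction is left implicit in the paper, and it is not automatic from the paper's computation alone, since equality of the two marginals it compares is strictly weaker than constancy of all of them; filling it in genuinely requires something like your general formula (or at least a computation of $S_{ij}$ for arbitrary position $i$). Both arguments share the same external ingredient --- that equation (\ref{eq::f_st_n-1_1_coeff0}) is equivalent to the vanishing of the coefficient, a fact the paper imports from the Fourier-analysis literature rather than proving --- so you are on equal footing there. In short: same method, but your execution is the more general and the more complete of the two.
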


Lemma \ref{lem::LOP_equalities_cyclicTransformation} states a property that is satisfied if $\hat{f}_{(n-1,1)} = 0$.

\begin{lemma}\label{lem::LOP_equalities_cyclicTransformation}
If $f:\Sigma_n\longrightarrow\mathbb{R}$ is an LOP function such that $\hat{f}_{(n-1,1)} = 0$, then 
$$f([i_1\ i_2\ i_3\ \cdots i_n]) = f([i_n\ i_1\ i_2\ \cdots i_{n-1}]).$$
\end{lemma}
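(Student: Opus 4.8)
The plan is to verify the identity by a direct expansion of both sides using the LOP objective (\ref{eq::LOP_function_definition}), and then to invoke Proposition \ref{prop::LOP_necessaryConditionA_n-1_1_zero} to cancel the leftover terms. The guiding observation is that the cyclic rotation $[i_1\ i_2\ \cdots\ i_n]\mapsto[i_n\ i_1\ \cdots\ i_{n-1}]$ alters the relative order of only one element, namely $i_n$: in the original arrangement $i_n$ sits behind all of $i_1,\ldots,i_{n-1}$, whereas after the rotation it sits in front of all of them, while the mutual ordering of $i_1,\ldots,i_{n-1}$ is left untouched. Hence almost every pairwise term in the objective is preserved, and only the terms coupling $i_n$ to the remaining elements are affected.

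Concretely, I would first write $f([i_1\ \cdots\ i_n]) = \sum_{1\le k<l\le n} a_{i_k i_l}$ and split off the pairs that involve the last position, obtaining $\sum_{1\le p<q\le n-1} a_{i_p i_q} + \sum_{p=1}^{n-1} a_{i_p i_n}$. Carrying out the same expansion for the rotated permutation $[i_n\ i_1\ \cdots\ i_{n-1}]$, the block of pairs among $i_1,\ldots,i_{n-1}$ reproduces the identical sum $\sum_{1\le p<q\le n-1} a_{i_p i_q}$ (since those elements retain their order, now shifted to positions $2,\ldots,n$), while the pairs involving the new position $1$ contribute $\sum_{p=1}^{n-1} a_{i_n i_p}$. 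Subtracting the two expressions then yields
\begin{equation*}
f([i_1\ \cdots\ i_n]) - f([i_n\ i_1\ \cdots\ i_{n-1}]) = \sum_{p=1}^{n-1}\left(a_{i_p i_n} - a_{i_n i_p}\right).
\end{equation*}

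It remains to show that this residual sum vanishes, which is precisely where the hypothesis $\hat{f}_{(n-1,1)}=0$ enters. Applying Proposition \ref{prop::LOP_necessaryConditionA_n-1_1_zero} to the index $i=i_n$ gives $\sum_{j=1}^n (a_{i_n j} - a_{j i_n}) = 0$. The single place requiring a moment of care is to recognize that $\{i_1,\ldots,i_{n-1}\}$ is exactly $\{1,\ldots,n\}\setminus\{i_n\}$, so that the sum over $j$ (whose $j=i_n$ term is zero because the diagonal is zero-valued) coincides with $\sum_{p=1}^{n-1}(a_{i_n i_p} - a_{i_p i_n})$, the negative of the residual above. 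The difference is therefore zero, and the lemma follows. I do not anticipate any genuine obstacle here: the argument is essentially index bookkeeping, and the only substantive point is matching the re-indexed residual to the hypothesis of Proposition \ref{prop::LOP_necessaryConditionA_n-1_1_zero}.
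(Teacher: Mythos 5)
Your proof is correct and follows essentially the same route as the paper's: expand both objective values via (\ref{eq::LOP_function_definition}), cancel the pairwise terms among $i_1,\ldots,i_{n-1}$ (which keep their relative order under the rotation), and apply Proposition \ref{prop::LOP_necessaryConditionA_n-1_1_zero} with $i=i_n$ to annihilate the residual $\sum_{p=1}^{n-1}(a_{i_p i_n}-a_{i_n i_p})$. If anything, you are slightly more explicit than the paper about the re-indexing step, namely that $\{i_1,\ldots,i_{n-1}\}=\{1,\ldots,n\}\setminus\{i_n\}$ and that the $j=i_n$ term in the proposition's sum contributes nothing.
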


\begin{example}
This means that, under the conditions of Lemma \ref{lem::LOP_equalities_cyclicTransformation}, given a permutation, other $n-1$ permutations have the same objective function value. As an example, for $n=4$,
$$f([1\ 2\ 3\ 4]) = f([4\ 1\ 2\ 3]) = f([3\ 4\ 1\ 2]) = f([2\ 3\ 4\ 1]).$$
\end{example}

\begin{proposition}\label{prop::LOP_n-1_1_zero_optimum_reduction}
Let $f_A$ be an LOP function of dimension $n-1$ with an input matrix 
$$A = \begin{bmatrix}
a_{11} & a_{12} & \cdots & a_{1n-1}\\
a_{21} & a_{22} & \cdots & a_{2n-1}\\
\vdots & \vdots & \ddots & \vdots\\
a_{n-11} & a_{n-12} & \cdots & a_{n-1n-1}
\end{bmatrix}.$$ 
Let $f_{A'}$ be the LOP function with input matrix
$$A' = \begin{bmatrix}
a_{11} & a_{12} & \cdots & a_{1n-1} & a_{1n}\\
a_{21} & a_{22} & \cdots & a_{2n-1} & a_{2n}\\
\vdots & \vdots & \ddots & \vdots & \vdots\\
a_{n1} & a_{n2} & \cdots & a_{nn-1} & a_{nn}
\end{bmatrix},$$
where
$$a_{in} = -\sum_{j=1}^n (a_{ij} - a_{ji}),\ \forall i=1,2,\cdots,n-1,$$
and
$$a_{ni} = 0,\ \forall i = 1,2,\cdots,n.$$
Under these conditions, $\sigma' = [i_1\ i_2\cdots\ i_{n-1}\ n]$ is a global optimum of $f_{A'}$ if and only if $\sigma = [i_1\ i_2\ \cdots i_{n-1}]$ is a global optimum of $f_{A}$.
\end{proposition}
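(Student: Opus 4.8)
The plan is to reduce the comparison over all of $\Sigma_n$ to a comparison over only those permutations that keep the freshly added element $n$ in the last position, where $f_{A'}$ collapses to $f_A$ plus a constant. Two structural features make this work: the last row of $A'$ is identically zero, and $A'$ is built so that $\hat{f}_{A',(n-1,1)} = 0$. First I would check this latter fact via Proposition \ref{prop::LOP_necessaryConditionA_n-1_1_zero}, i.e. that $\sum_{j=1}^n (a'_{ij}-a'_{ji}) = 0$ for every $i=1,\ldots,n$. For $i\le n-1$ this is exactly the defining relation for $a_{in}$ combined with $a_{ni}=0$; for $i=n$ it reduces to $-\sum_{j=1}^{n-1} a'_{jn} = \sum_{j,k=1}^{n-1}(a_{jk}-a_{kj})$, which vanishes by antisymmetry of the summand. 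Once this is in hand, Lemma \ref{lem::LOP_equalities_cyclicTransformation} applies to $f_{A'}$, so $f_{A'}$ is invariant under the cyclic shift $[i_1\ \cdots\ i_n]\mapsto[i_n\ i_1\ \cdots\ i_{n-1}]$ and hence under the whole cyclic group it generates.

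Next I would record the key identity for permutations ending in $n$. Writing $\sigma'=[i_1\ \cdots\ i_{n-1}\ n]$ and splitting the sum (\ref{eq::LOP_function_definition}) into pairs lying inside the first $n-1$ positions and pairs whose second index is $n$, the zero last row kills every term in which $n$ is the first index, leaving
\begin{equation*}
f_{A'}([i_1\ \cdots\ i_{n-1}\ n]) = f_A([i_1\ \cdots\ i_{n-1}]) + \sum_{k=1}^{n-1} a_{kn},
\end{equation*}
where the final sum is a constant $C$ independent of the ordering of $i_1,\ldots,i_{n-1}$. This already yields the forward implication: if $\sigma'$ maximizes $f_{A'}$ over all of $\Sigma_n$, then in particular it maximizes $f_{A'}$ among permutations ending in $n$, so $[i_1\ \cdots\ i_{n-1}]$ maximizes $f_A$.

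For the converse and the full equivalence I would exploit the cyclic invariance to bring an arbitrary permutation into this special form. Given any $\pi\in\Sigma_n$, rotating it until $n$ occupies the last position produces some $\pi^\ast = [j_1\ \cdots\ j_{n-1}\ n]$ with $f_{A'}(\pi) = f_{A'}(\pi^\ast) = f_A([j_1\ \cdots\ j_{n-1}]) + C \le \max_\tau f_A(\tau) + C$. Hence the global maximum value of $f_{A'}$ equals $\max_\tau f_A(\tau) + C$, and it is attained precisely by the permutations of the form $[\tau\ n]$ with $\tau$ a global maximum of $f_A$; this is exactly the asserted equivalence.

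The main obstacle I anticipate is the bookkeeping in the first step: the formula defining $a_{in}$ with the sum running up to $j=n$ is mildly self-referential, so I would need to fix the convention (the $j=n$ term being governed by $a_{ni}=0$, so that effectively $a_{in}=-\sum_{j=1}^{n-1}(a_{ij}-a_{ji})$) and verify that it produces exactly the vanishing column sums demanded by Proposition \ref{prop::LOP_necessaryConditionA_n-1_1_zero}, including the $i=n$ case. Everything downstream is then a clean splitting of the objective plus a routine application of the cyclic symmetry.
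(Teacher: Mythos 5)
Your proof is correct and follows essentially the same route as the paper's: both arguments reduce the comparison to permutations ending in $n$, where $f_{A'}$ equals $f_A$ plus a constant, and both invoke the cyclic invariance of Lemma \ref{lem::LOP_equalities_cyclicTransformation} to extend the comparison from permutations ending in $n$ to all of $\Sigma_n$ in the converse direction. Your explicit verification that $A'$ satisfies the hypothesis of Proposition \ref{prop::LOP_necessaryConditionA_n-1_1_zero} (including fixing the self-referential definition of $a_{in}$) is a detail the paper leaves implicit, but it does not alter the approach.
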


\begin{theorem}\label{theo::LOP_n-1_1_zero_NP_hard}
The problem composed by the LOP instances such that their objective function $f$ satisfies $\hat{f}_{(n-1,1)} = 0$ is NP-hard.
\end{theorem}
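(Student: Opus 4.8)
The plan is to establish NP-hardness by a polynomial-time reduction \emph{from} the general LOP, which is known to be NP-hard \cite{marti2011linear}, \emph{to} the restricted family of instances whose objective function satisfies $\hat{f}_{(n-1,1)} = 0$. Because every restricted instance is in particular an LOP instance, nothing extra is needed for the ``easy'' direction; the whole task reduces to exhibiting a polynomial map that sends an arbitrary LOP instance into the restricted family while letting us recover an optimum of the source from an optimum of the target.

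First I would take an arbitrary LOP instance of dimension $n-1$ with matrix $A$ and construct the augmented matrix $A'$ of dimension $n$ exactly as prescribed in Proposition \ref{prop::LOP_n-1_1_zero_optimum_reduction}, defining the new column by $a_{in} = -\sum_{j}(a_{ij}-a_{ji})$ and the new row by $a_{ni}=0$. This is clearly computable in time polynomial in the size of $A$. The crucial point is to verify that $f_{A'}$ really lies in the restricted family, i.e.\ that its coefficient $\hat{f}_{(n-1,1)}$ vanishes. By Proposition \ref{prop::LOP_necessaryConditionA_n-1_1_zero} this is equivalent to every antisymmetrized row sum of $A'$ being zero. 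For a row $i\le n-1$, the new column contributes exactly $a_{in}$, which by construction cancels $\sum_{j\le n-1}(a_{ij}-a_{ji})$; for the last row $i=n$ the row is identically zero and the remaining column sum vanishes by the antisymmetry of the double sum $\sum_{j,k}(a_{jk}-a_{kj})$. Hence the reduction genuinely lands inside the restricted family.

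Next I would show that solving the target instance solves the source. Proposition \ref{prop::LOP_n-1_1_zero_optimum_reduction} matches the optima of $f_A$ with those optima of $f_{A'}$ that carry $n$ in the last position. The one subtlety is that a black-box solver for the restricted problem may return an optimum of $f_{A'}$ that does not place $n$ last. This is handled by Lemma \ref{lem::LOP_equalities_cyclicTransformation}: since $\hat{f}_{(n-1,1)}=0$ for $f_{A'}$, cyclic shifts preserve its value, so any optimum can be rotated in polynomial time into an equivalent optimum of the form $[\,i_1\ \cdots\ i_{n-1}\ n\,]$. Deleting the trailing $n$ then returns a global optimum of the original $(n-1)$-dimensional instance by Proposition \ref{prop::LOP_n-1_1_zero_optimum_reduction}. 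Consequently, any polynomial algorithm for the restricted problem would solve the general LOP in polynomial time, which proves the claim.

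The main obstacle is not producing the reduction map---that is supplied by Proposition \ref{prop::LOP_n-1_1_zero_optimum_reduction}---but the two checks that make it a \emph{valid} reduction into the correct class: confirming through Proposition \ref{prop::LOP_necessaryConditionA_n-1_1_zero} that the augmented instance truly satisfies $\hat{f}_{(n-1,1)}=0$ (so we are reducing into the restricted family and not merely into the full LOP), and dealing with the non-uniqueness of the optimum's last position, which forces the cyclic normalization provided by Lemma \ref{lem::LOP_equalities_cyclicTransformation}. Once both are secured the argument closes.
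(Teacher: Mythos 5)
Your proof is correct and follows essentially the same route as the paper: a polynomial-time reduction from the general LOP of dimension $n-1$ to the restricted family via the augmented matrix $A'$ of Proposition \ref{prop::LOP_n-1_1_zero_optimum_reduction}. In fact, you make explicit two points the paper's very terse proof leaves implicit---verifying through Proposition \ref{prop::LOP_necessaryConditionA_n-1_1_zero} that $A'$ really yields $\hat{f}_{(n-1,1)}=0$, and using Lemma \ref{lem::LOP_equalities_cyclicTransformation} to rotate an arbitrary optimum of $f_{A'}$ so that $n$ sits in the last position---so your write-up is a more complete version of the same argument.
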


\begin{proof}
To prove this, we see that any LOP of dimension $n-1$ can be ``reduced'' to an LOP of dimension $n$ such that $\hat{f}_{(n-1,1)} = 0$. Let $A$ be the input matrix of an LOP of dimension $n-1$. If we define $A'$ as in the statement of Proposition \ref{prop::LOP_n-1_1_zero_optimum_reduction}, then solving the LOP of dimension $n-1$ without restrictions is equivalent to solving the LOP of dimension $n$ such that $\hat{f}_{(n-1,1)} = 0$.

\end{proof}


\subsection{Implications}

Propositions \ref{prop::LOP_necessaryConditionA_n-2_1_1_zero} and \ref{prop::LOP_necessaryConditionA_n-1_1_zero} give the exact conditions that an input matrix $A$ of the LOP must satisfy so that coefficient $(n-2,1,1)$ or $(n-1,1)$, respectively, is zero. Considering that the FT is linear and that there is a bijection between Fourier coefficients and permutation-based functions, it means that any LOP function $f$ can be written as
$$f = f_1 + f_2,$$
where $f_1$ has an input matrix as described in Proposition \ref{prop::LOP_necessaryConditionA_n-2_1_1_zero}, while $f_2$ has an input matrix as described in Proposition \ref{prop::LOP_necessaryConditionA_n-1_1_zero}. In addition, considering the subset of the LOP where the input matrix satisfies the conditions of Proposition \ref{prop::LOP_necessaryConditionA_n-2_1_1_zero}, we have seen that this is a P problem. On the other hand, the subset of the LOP for which the input matrix satisfies the conditions of Proposition \ref{prop::LOP_necessaryConditionA_n-1_1_zero} is NP-hard. 

As a final remark, since $f_1$ satisfies $\hat{f_1}_{(n-2,1,1)} = 0$, this function is only composed of the univariate information of $f$, while $f_2$ satisfies $\hat{f_2}_{(n-1,1)} = 0$, so it contains no univariate information of $f$ at all.

%
%

\section{Constructive algorithms}\label{sec::constructives}

In this section, the constructive algorithms used in the experimental part are presented. One of the algorithms is Becker's constructive \cite{becker1967helmstadtersche}, which is widely known in the literature, while the other three have been specifically designed to be used in this study.


\subsection{Constructive based on row/column subtractions}

The first constructive that we have designed has two versions. Both of them are rooted in the following theoretical results, which are related to the mean of the objective function values when the first elements of the permutations are fixed.

\begin{proposition}\label{prop::LOP_mean_partial_permu}
Let $S = \{\sigma: \sigma(1) = i_1, \sigma(2) = i_2, \cdots, \sigma(k) = i_{k}\ \ , \ k \le n\}$,  then
\begin{align*}
& \frac{1}{|S|}\sum_{\sigma\in S} f(\sigma) = \sum_{j\in \{1,2,\cdots,n\}-\{i_1\}} a_{i_1 j}  + \\& \sum_{j\in \{1,2,\cdots,n\}-\{i_1, i_2\}} a_{i_2 j} + \cdots + \\& \sum_{j\in \{1,2,\cdots,n\}-\{i_1, i_2, \cdots, i_{k}\}} a_{i_k j}  +\\& \frac{1}{2}\sum_{l,m\in \{1,2,\cdots,n\}-\{i_1,i_2, \cdots, i_k\}} a_{lm}.
\end{align*}
\end{proposition}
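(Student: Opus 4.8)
The plan is to compute the average directly, by partitioning for each permutation $\sigma \in S$ the set of ordered position pairs $(i,j)$ with $i<j$ that appear in the definition (\ref{eq::LOP_function_definition}) according to where the two positions fall relative to the fixed prefix of length $k$. Writing $R = \{1,2,\dots,n\}-\{i_1,\dots,i_k\}$ for the set of $n-k$ free elements, each unordered pair of \emph{elements} contributes to $f(\sigma)$ in exactly one of three mutually exclusive ways: (i) both elements lie in the fixed prefix, (ii) one element is fixed and the other is free, or (iii) both elements are free. My strategy is to show that the contributions of types (i) and (ii) are identical for every $\sigma \in S$, so that averaging leaves them unchanged, while type (iii) is the only part that genuinely depends on $\sigma$ and must be averaged.

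For type (i), the element $i_s$ sits at position $s$ for every $\sigma\in S$, so whenever $s<t\le k$ the term $a_{i_s i_t}$ appears; the total type-(i) contribution is thus the constant $\sum_{1\le s<t\le k} a_{i_s i_t}$. For type (ii), a fixed element $i_s$ occupies position $s\le k$ while a free element $m\in R$ occupies some position strictly greater than $k$, so $i_s$ always precedes $m$ and the term $a_{i_s m}$ always appears; the total type-(ii) contribution is the constant $\sum_{s=1}^{k}\sum_{m\in R} a_{i_s m}$. Adding these two constants and regrouping by $s$, I would observe that $\sum_{s<t\le k} a_{i_s i_t} + \sum_{m\in R} a_{i_s m}$ equals $\sum_{j\notin\{i_1,\dots,i_s\}} a_{i_s j}$, since the indices avoided are precisely $i_1,\dots,i_s$ (the diagonal term vanishes by the zero-diagonal assumption). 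Summing over $s=1,\dots,k$ then reproduces the first $k$ summands of the claimed formula as partial row sums.

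The heart of the argument, and the step I expect to require the most care, is the averaging of the type-(iii) contribution. Here I would fix an unordered pair $\{l,m\}\subseteq R$ and count, over the $(n-k)!$ permutations in $S$, how often $l$ precedes $m$ among the free positions. By the symmetry of $S$ under transposing the two positions occupied by $l$ and $m$ — a bijection of $S$ onto itself that swaps the two relative orderings of this pair while leaving every other placement intact — exactly half of the permutations place $l$ before $m$ and half place $m$ before $l$. Hence the average contribution of the pair $\{l,m\}$ is $\tfrac12(a_{lm}+a_{ml})$, and summing over all unordered pairs of free elements (using the zero diagonal) yields $\tfrac12\sum_{l,m\in R} a_{lm}$, the final term of the statement.

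Finally I would assemble the pieces: types (i) and (ii) are constant on $S$, so their average equals themselves and combine into the first $k$ row sums, while the averaged type-(iii) term supplies the last summand; together these give the right-hand side. The only points needing genuine verification are the clean regrouping of the fixed and split contributions into the row sums $\sum_{j\notin\{i_1,\dots,i_s\}} a_{i_s j}$, and the equal-splitting claim for each free pair, which is the one place where the average over $S$ actually does the work.
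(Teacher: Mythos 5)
Your proposal is correct and follows exactly the argument the paper sketches (the paper omits the full proof, saying it amounts to counting, for each pair of elements, how many permutations in $S$ place one before the other): the pairs with at least one fixed element give constant contributions that regroup into the partial row sums, and each free--free pair splits evenly by the transposition symmetry of $S$, giving the $\tfrac12\sum_{l,m} a_{lm}$ term. Nothing is missing; your write-up is simply a fleshed-out version of the paper's intended proof.
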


\begin{proposition}\label{prop::LOP_mean_difference_partial_permu}
Let $S_k = \{\sigma: \sigma(1) = i_1, \sigma(2) = i_2, \cdots, \sigma(k) = i_{k} \ , \ k \le n \}$ and $S_{k-1} = \{\sigma: \sigma(1) = i_1, \sigma(2) = i_2, \cdots, \sigma(k-1) = i_{k-1} \ , \ k \le n\}$ and
$$\mu_{k} = \frac{1}{|S_k|}\sum_{\sigma\in S_k} f(\sigma) \qquad\text{and}\qquad \mu_{k-1} = \frac{1}{|S_{k-1}|}\sum_{\sigma\in S_{k-1}} f(\sigma).$$
Then,
$$\mu_{k} - \mu_{k-1} = \frac{1}{2} \sum_{j\in \{1,2,\cdots,n\}-\{i_1, i_2, \cdots, i_{k}\}} (a_{i_k j} - a_{j i_k}). $$
\end{proposition}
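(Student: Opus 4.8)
The plan is to derive the statement directly from Proposition \ref{prop::LOP_mean_partial_permu}, which already provides closed-form expressions for the conditional means $\mu_k$ and $\mu_{k-1}$. Writing out both expressions and subtracting term by term, I expect almost everything to cancel, leaving only a small residue that simplifies to the claimed antisymmetric sum.

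First I would apply Proposition \ref{prop::LOP_mean_partial_permu} to both $S_k$ and $S_{k-1}$. The resulting formula for $\mu_k$ contains $k$ single sums (each of the form $\sum_j a_{i_\ell j}$ over a progressively shrinking index set) plus one half-weighted double sum over the leftover indices, and likewise for $\mu_{k-1}$ with $k-1$ single sums. The crucial observation is that the first $k-1$ single sums are \emph{identical} in the two formulas, since they depend only on $i_1,\dots,i_{k-1}$ and their associated index sets; hence they cancel in the difference. Abbreviating $T_{k-1} = \{1,\dots,n\}-\{i_1,\dots,i_{k-1}\}$ and $T_k = T_{k-1}\setminus\{i_k\}$, what survives is
$$\mu_k - \mu_{k-1} = \sum_{j\in T_k} a_{i_k j} + \tfrac{1}{2}\sum_{l,m\in T_k} a_{lm} - \tfrac{1}{2}\sum_{l,m\in T_{k-1}} a_{lm}.$$

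The only step requiring care is the difference of the two double sums. Since $T_{k-1} = T_k \cup \{i_k\}$, the extra terms present in the sum over $T_{k-1}$ but absent from the sum over $T_k$ are exactly those with a row or column index equal to $i_k$, namely $\sum_{m\in T_k} a_{i_k m}$, $\sum_{l\in T_k} a_{l i_k}$, and the single diagonal term $a_{i_k i_k}$. Invoking the standing assumption that the diagonal of $A$ is zero removes the last term, so the double-sum difference reduces to $-\tfrac{1}{2}\sum_{j\in T_k}(a_{i_k j} + a_{j i_k})$.

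Substituting this back and combining with the surviving single sum $\sum_{j\in T_k} a_{i_k j}$, the coefficient of each $a_{i_k j}$ becomes $1 - \tfrac{1}{2} = \tfrac{1}{2}$ while the coefficient of each $a_{j i_k}$ becomes $-\tfrac{1}{2}$, yielding $\tfrac{1}{2}\sum_{j\in T_k}(a_{i_k j} - a_{j i_k})$, which is precisely the claim. I do not anticipate any genuine obstacle here: the entire argument is bookkeeping on top of Proposition \ref{prop::LOP_mean_partial_permu}, and the one mildly subtle point—correctly enumerating the terms removed when $i_k$ leaves the index set—is handled cleanly by the zero-diagonal convention.
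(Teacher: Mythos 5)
Your proposal is correct and follows exactly the route the paper intends: the paper omits the detailed argument but states that Proposition~\ref{prop::LOP_mean_difference_partial_permu} ``can be proved simply by applying the formula of Proposition~\ref{prop::LOP_mean_partial_permu},'' which is precisely your subtraction-and-cancellation bookkeeping, including the correct handling of the boundary terms $\sum_{j} a_{i_k j}$, $\sum_{j} a_{j i_k}$ and the zero-diagonal term when $i_k$ leaves the index set.
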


The proof of both propositions is quite straightforward, so it has been omitted. Proposition \ref{prop::LOP_mean_partial_permu} can be proved by counting, for each pair of indices $i$ and $j$, how many permutations $\sigma\in S$ satisfy that $\sigma(i) < \sigma(j)$. On the other hand, Proposition \ref{prop::LOP_mean_difference_partial_permu} can be proved simply by applying the formula of Proposition \ref{prop::LOP_mean_partial_permu}.

By means of both propositions, one can define a procedure that, at each step $k$, constructs a solution $\sigma$ by fixing $\sigma(k) = i_k$, where $i_k$ is the index that maximizes the mean of the objective-function values assuming that $\sigma(1) = i_1$, $\sigma(2) = i_2$, $\cdots$, $\sigma(k-1) = i_{k-1}$ have been fixed. If these indices have been fixed, according to Proposition \ref{prop::LOP_mean_difference_partial_permu}, the difference between the mean obtained in the previous step and the new mean is fairly simple to compute. Instead of directly maximizing the value of the new mean, our constructive maximizes the difference between the previous and the new mean, which is equivalent. Algorithm \ref{alg::constructive_SS} describes this process.

\begin{algorithm}
\hspace*{\algorithmicindent} \textbf{Input:} $A$, input matrix of the LOP of dimension $n$. \\
\hspace*{\algorithmicindent} \textbf{Output:} $\sigma$.

\begin{algorithmic}[]
\caption{Simple constructive based on row/column subtractions}\label{alg::constructive_SS}
\STATE $I = \{1,2,\cdots,n\}$
\STATE $\sigma = ()$
\WHILE {$I$ is not empty}
	\STATE $q_i = \sum_{j\in I-\{i\}} (a_{ij} - a_{ji}),\ \forall i\in I$.
	\STATE $i_{max} = \argmax_{i\in I} \{q_i\}$
	\STATE append $i_{max}$ to $\sigma$
	\STATE delete $i_{max}$ from $I$
\ENDWHILE
\end{algorithmic}
\end{algorithm}

A more sophisticated version of the procedure can be designed by observing that Proposition \ref{prop::LOP_mean_partial_permu} can be extended to indices that are fixed at the end of the permutation, as follows:

\begin{proposition}\label{prop::LOP_mean_partial_permu_sophisticated}
Let $S_1 = \{\sigma: \sigma(1) = i_1, \sigma(2) = i_2, \cdots, \sigma(k) = i_{k} \ , \ k \le n\}$ and $S_2 = \{\sigma: \sigma(n) = j_1, \sigma(n-1) = j_2, \cdots, \sigma(n-l+1) = j_{l} \ , \ l \ge 1\}$,  then
\begin{align*}
& \frac{1}{|S_1\cap S_2|}\sum_{\sigma\in S_1\cap S_2} f(\sigma) = \sum_{j\in \{1,2,\cdots,n\}-\{i_1\}} a_{i_1 j}  + \\& \sum_{j\in \{1,2,\cdots,n\}-\{i_1, i_2\}} a_{i_2 j} + \cdots + \\& \sum_{j\in \{1,2,\cdots,n\}-\{i_1, i_2, \cdots, i_{k}\}} + \sum_{i\in \{1,2,\cdots,n\}-\{i_1,i_2,\cdots,i_k\}} a_{i j_1} + \\& \cdots + \sum_{i\in \{1,2,\cdots,n\}-\{i_1,i_2,\cdots,i_k,j_1,\cdots, j_{l-1}\}} a_{i j_l} +
\\& \frac{1}{2}\sum_{i,j\in \{1,2,\cdots,n\}-\{i_1,i_2, \cdots, i_k, j_1, j_2,\cdots, j_l\}} a_{ij}.
\end{align*}
\end{proposition}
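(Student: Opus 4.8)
The plan is to mirror the counting argument indicated for Proposition \ref{prop::LOP_mean_partial_permu}, now keeping track simultaneously of the elements fixed at the front and at the back. First I would rewrite the objective function as a sum over ordered pairs of elements: since $f(\sigma)=\sum_{p<q}a_{\sigma(p)\sigma(q)}$ and the diagonal of $A$ vanishes, we have
$$f(\sigma) = \sum_{u\neq v} a_{uv}\,\mathbf{1}\{u \text{ precedes } v \text{ in } \sigma\}.$$
Averaging over $S_1\cap S_2$ and using linearity, the left-hand side becomes $\sum_{u\neq v} a_{uv}\,P(u \text{ before } v)$, where $P(u\text{ before }v)$ denotes the fraction of permutations in $S_1\cap S_2$ in which $u$ appears before $v$. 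The whole proof then reduces to evaluating this precedence probability for every ordered pair and collecting terms.

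Next I would partition $\{1,\dots,n\}$ into three blocks: the front-fixed elements $\{i_1,\dots,i_k\}$ occupying positions $1,\dots,k$, the back-fixed elements $\{j_1,\dots,j_l\}$ occupying positions $n,n-1,\dots,n-l+1$, and the remaining free elements that fill the middle positions $k+1,\dots,n-l$ in every order. For each ordered pair I would determine $P(u\text{ before }v)$ by cases. If both elements are free, symmetry gives $P=\tfrac12$. If at least one element is fixed, the precedence is deterministic ($0$ or $1$): a front element always precedes any free or back element and precedes a later front element; a free element always precedes any back element; and, crucially, because the back positions are filled in reverse order ($j_c$ sits at position $n-c+1$), the element $j_c$ precedes $j_d$ exactly when $c>d$.

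I would then assemble the nonzero contributions into the three groups appearing on the right-hand side. The coefficient-$1$ contributions whose first index is a front element $i_a$ are exactly the $a_{i_a v}$ with $v\notin\{i_1,\dots,i_a\}$, producing the front sums (identical in form to Proposition \ref{prop::LOP_mean_partial_permu}). The coefficient-$1$ contributions whose second index is a back element $j_d$ and whose first index is \emph{not} a front element yield the back sums $\sum_{i} a_{i j_d}$ ranging over $\{1,\dots,n\}\setminus\{i_1,\dots,i_k,j_1,\dots,j_{d-1}\}$; the convention $a_{j_d j_d}=0$ lets $j_d$ itself be harmlessly included in this index set. Finally, the free-versus-free pairs contribute the factor-$\tfrac12$ double sum over the remaining elements.

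The only delicate point is the bookkeeping that guarantees each pair is counted exactly once. Pairs whose first index is a front element are absorbed entirely into the front sums — including all front-to-back pairs $a_{i_a j_d}$ — so the back sums must exclude first indices lying in $\{i_1,\dots,i_k\}$, which is precisely why those indices are removed from the ranges of the back sums. To close the argument I would verify that the remaining block-pairs, namely free-before-front, back-before-front, and back-before-free, all have precedence probability $0$ and hence contribute nothing. This case-checking is mechanical, but it is the main place where an error could creep in, so I expect the consistent accounting across the three blocks to be the principal obstacle.
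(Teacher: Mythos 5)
Your proof is correct and follows exactly the pair-counting argument the paper itself relies on: the paper omits the detailed proof of this proposition, saying only that it extends the counting argument sketched for Proposition \ref{prop::LOP_mean_partial_permu} (counting, for each ordered pair of elements, in how many permutations of the set the first precedes the second). Your bookkeeping — the reverse filling of the back positions ($j_c$ at position $n-c+1$), the exclusion of front-fixed first indices from the back sums to avoid double counting the front-to-back pairs, and the harmless inclusion of $j_d$ via the zero diagonal — is precisely what is needed, and it also implicitly repairs the typographical omission in the statement (the missing summand $a_{i_k j}$ in the $k$-th front sum).
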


Proceeding analogously to the simpler version, Algorithm \ref{alg::constructive_S} can be proposed.

\begin{algorithm}
\hspace*{\algorithmicindent} \textbf{Input:} $A$, input matrix of the LOP of dimension $n$. \\
\hspace*{\algorithmicindent} \textbf{Output:} $\sigma$.

\begin{algorithmic}[]
\caption{Constructive based on row/column subtractions}\label{alg::constructive_S}
\STATE $I = \{1,2,\cdots,n\}$
\STATE $\sigma = (-1)$
\WHILE {$I$ is not empty}
	\STATE $q_i = \sum_{j\in I-\{i\}} (a_{ij} - a_{ji}),\ \forall i\in I$.
	\STATE $i_1 = \argmax \{q_i: i\in I\}$
	\STATE $i_2 = \argmax \{-q_i: i\in I\}$
	\IF {$q_{i_1} > -q_{i_2}$}
		\STATE insert $i_1$ in $\sigma$ immediately before the $-1$
		\STATE delete $i_1$ from $I$
	\ELSE 
		\STATE insert $i_2$ in $\sigma$ immediately after the $-1$
		\STATE delete $i_2$ from $I$
	\ENDIF
\ENDWHILE
\STATE delete the -1 from $\sigma$
\end{algorithmic}
\end{algorithm}


\subsection{Constructive based on the matrix of univariate information}

Instead of just considering the mean values when the index is fixed at the beginning of the permutation or at the end, as is done in the previous constructive, the idea of this constructive consists in obtaining a solution taking into account all of the means of the objective function values when $\sigma(i) = j$, for any $i$ and $j$. The first step in this direction can be taken through Propositions \ref{prop::LOP_meanValues_i_j} and \ref{prop::LOP_meanValues_i_j_difference} below:

\begin{proposition}\label{prop::LOP_meanValues_i_j}
Given two indices $i,j\in\{1,2,\cdots, n\}$, the mean objective-function value of the permutations $\sigma$ such that $\sigma(i) = j$ is the following:
\begin{align*}
& \frac{1}{(n-1)!} \sum_{\sigma:\sigma(i) = j} f(\sigma) = \frac{i-1}{n-1}\sum_{k\ne j}a_{kj} + \frac{n-i}{n-1}\sum_{k\ne j}a_{jk} + \\& \frac{1}{2}\sum_{\substack{k\ne l\\k,l\ne j}} a_{kl}.
\end{align*}
\end{proposition}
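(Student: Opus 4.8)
The plan is to compute the sum $\sum_{\sigma:\sigma(i)=j} f(\sigma)$ directly by exchanging the order of summation and counting permutations, and then divide by the number $(n-1)!$ of permutations that fix $\sigma(i)=j$. The key first step is to rewrite the objective function in terms of \emph{precedence}: since in (\ref{eq::LOP_function_definition}) the entry $a_{\sigma(p)\sigma(q)}$ is summed exactly over position pairs $p<q$, the coefficient $a_{kl}$ contributes to $f(\sigma)$ precisely when element $k$ is placed before element $l$. Thus I would write
$$f(\sigma) = \sum_{\substack{k,l=1\\ k\ne l}}^{n} a_{kl}\,\mathbf{1}\!\left[\sigma^{-1}(k) < \sigma^{-1}(l)\right],$$
where each unordered pair $\{k,l\}$ contributes exactly one of $a_{kl}$ or $a_{lk}$, matching the $\binom{n}{2}$ summands of $f$. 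Summing over all $\sigma$ with $\sigma(i)=j$ and exchanging the finite sums gives
$$\sum_{\sigma:\sigma(i)=j} f(\sigma) = \sum_{\substack{k,l\\ k\ne l}} a_{kl}\, N_{kl},\qquad N_{kl}=\#\{\sigma:\sigma(i)=j,\ \sigma^{-1}(k)<\sigma^{-1}(l)\}.$$

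The core of the argument is then to evaluate $N_{kl}$ by splitting into three cases according to the role of the fixed value $j$. When $k=j$, the constraint $\sigma(i)=j$ forces $\sigma^{-1}(j)=i$, so the precedence condition becomes $i<\sigma^{-1}(l)$; element $l$ may occupy any of the $n-i$ positions to the right of $i$, and the remaining $n-2$ elements are arranged freely, giving $N_{jl}=(n-i)(n-2)!$. Symmetrically, when $l=j$ the condition is $\sigma^{-1}(k)<i$, yielding $N_{kj}=(i-1)(n-2)!$. When neither $k$ nor $l$ equals $j$, I would invoke a symmetry argument: the transposition swapping the values $k$ and $l$ is a bijection on the set of permutations fixing $\sigma(i)=j$ that exchanges the two precedence orders, so exactly half of the $(n-1)!$ such permutations place $k$ before $l$, giving $N_{kl}=\tfrac12(n-1)!$.

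Finally I would substitute these counts back, divide through by $(n-1)!$, and simplify the factorial ratios using $(n-2)!/(n-1)! = 1/(n-1)$. The $k=j$ terms collapse to $\frac{n-i}{n-1}\sum_{l\ne j}a_{jl}$, the $l=j$ terms to $\frac{i-1}{n-1}\sum_{k\ne j}a_{kj}$, and the remaining terms to $\frac12\sum_{k\ne l,\ k,l\ne j}a_{kl}$, which is exactly the claimed expression. The only genuine subtlety, and the step I would verify most carefully, is the case analysis for $N_{kl}$: keeping the position constraints consistent with $\sigma(i)=j$ and correctly justifying the one-half factor via the swap bijection. The factorial bookkeeping and the collection of terms are routine once the three counts are established.
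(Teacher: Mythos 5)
Your proof is correct and takes essentially the approach the paper intends: the paper omits the proof of this proposition as ``quite straightforward,'' indicating only that such statements are proved by counting, for each pair of elements, how many of the relevant permutations place one before the other, which is precisely your computation of $N_{kl}$. Your case analysis --- the counts $(n-i)(n-2)!$ and $(i-1)(n-2)!$ when $j$ is involved, and the factor $\tfrac{1}{2}(n-1)!$ via the value-swapping bijection otherwise --- is accurate and completes the argument the paper leaves to the reader.
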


\begin{proposition}\label{prop::LOP_meanValues_i_j_difference}
Given
$$\mu_{ij}=\frac{1}{(n-1)!} \sum_{\sigma:\sigma(i) = j} f(\sigma)$$
and
$$\mu_{i+1j}=\frac{1}{(n-1)!} \sum_{\sigma:\sigma(i+1) = j} f(\sigma),$$
then,
$$\mu_{i+1 j} - \mu_{i j} = \frac{1}{n-1}\sum_{k\ne j} (a_{kj} - a_{jk}).$$
\end{proposition}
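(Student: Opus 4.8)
The plan is to invoke Proposition~\ref{prop::LOP_meanValues_i_j}, which already supplies a closed-form expression for $\mu_{ij}$, and simply take the difference of the two instances of that formula corresponding to positions $i+1$ and $i$. Because both means are expressed purely in terms of the matrix entries and the position index, no further combinatorial counting over permutations is needed; the proof reduces to elementary algebra.

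First I would write the expression from Proposition~\ref{prop::LOP_meanValues_i_j} for $\mu_{ij}$, namely
\begin{align*}
\mu_{ij} = \frac{i-1}{n-1}\sum_{k\ne j}a_{kj} + \frac{n-i}{n-1}\sum_{k\ne j}a_{jk} + \frac{1}{2}\sum_{\substack{k\ne l\\k,l\ne j}} a_{kl},
\end{align*}
and then substitute $i+1$ for $i$ to obtain $\mu_{i+1,j}$, which carries the coefficients $\frac{i}{n-1}$ and $\frac{n-i-1}{n-1}$ in front of the same two single sums, while the third (double) sum remains unchanged.

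Next I would subtract the two expressions. The crucial observation is that the double sum $\frac12\sum_{k\ne l,\,k,l\ne j} a_{kl}$ does not depend on the position index $i$, so it cancels exactly. For the surviving terms, the coefficient of $\sum_{k\ne j} a_{kj}$ increases by $\frac{i-(i-1)}{n-1}=\frac{1}{n-1}$, while the coefficient of $\sum_{k\ne j} a_{jk}$ changes by $\frac{(n-i-1)-(n-i)}{n-1}=-\frac{1}{n-1}$. Merging the two remaining sums then yields
\begin{align*}
\mu_{i+1,j} - \mu_{ij} = \frac{1}{n-1}\sum_{k\ne j}\bigl(a_{kj}-a_{jk}\bigr),
\end{align*}
which is precisely the claimed identity.

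There is no genuinely hard step here: the only point requiring care is the bookkeeping of the position-dependent coefficients and the verification that the $i$-independent quadratic term drops out. One should also note, implicitly, that the restriction $i+1\le n$ is assumed so that $\mu_{i+1,j}$ is well defined, which holds whenever the difference in the statement is meaningful.
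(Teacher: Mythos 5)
Your proof is correct and matches exactly what the paper intends: the paper omits the proof as ``straightforward,'' indicating (as with Proposition~\ref{prop::LOP_mean_difference_partial_permu}) that it follows by applying the formula of the preceding proposition, which is precisely your subtraction of the two instances of Proposition~\ref{prop::LOP_meanValues_i_j}. The algebra checks out: the $i$-independent quadratic term cancels and the coefficient bookkeeping gives $\frac{1}{n-1}\sum_{k\ne j}(a_{kj}-a_{jk})$ as claimed.
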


As happens with the propositions of the previous constructive, this propositions are quite straightforward to prove, so the prove is again omitted. Observe that the difference between $\mu_{ij}$ and $\mu_{i+1 j}$ is independent of $i$. This fact has two main consequences. Firstly, for a given $j$, the highest mean is always reached when $\sigma(1) = j$ or when $\sigma(n) = j$. Secondly, since the difference is independent of $i$, the set $\{\mu_{ij}\ |\ i,j = 1,2,\cdots,n\}$ of mean values is very easy to be computed. The constructive algorithm that we propose is based on the matrix of means $[\mu_{ij}]$, and solves the following problem:
\begin{equation}\label{eq::LAP_matrix_of_means}
\sigma = \argmax_{\omega\in\Sigma_n} \sum_{i=1,2,\cdots,n} \mu_{i \omega(i)}, 
\end{equation}
  which is a Linear Assignment Problem \cite{commander2005survey} and can be solved efficiently in polynomial time. In our procedure, instead of solving the problem with matrix $[\mu_{ij}]$, we solve it with a matrix $M$ defined as follows:
\begin{align*}
m_{1j} = & \sum_{j=1}^n (a_{1j} - a_{j1}),\ \forall j = 1,2,\cdots,n. \\
m_{nj} = & \sum_{j=1}^n (a_{nj} - a_{jn}),\ \forall j = 1,2,\cdots,n. \\
m_{ij} = & m_{1j} + (i-1)\frac{m_{nj} - m_{1j}}{n-1},\ \forall i,j\ne 1.
\end{align*}
Following a similar reasoning as in the previous section, one can check that solving the Linear Assignment Problem with matrix $M$ is equivalent to solving equation (\ref{eq::LAP_matrix_of_means}).

\begin{algorithm}
\hspace*{\algorithmicindent} \textbf{Input:} $A$, input matrix of the LOP of dimension $n$. \\
\hspace*{\algorithmicindent} \textbf{Output:} a matrix $M$ of dimension $n$.

\begin{algorithmic}[]
\caption{Constructive based on the matrix of univariate information}\label{alg::matrix}
\STATE $m_{1j} = \sum_{j}(a_{1j} - a_{j1}),\ \forall j = 1,2,\cdots,n$.
\STATE $m_{n1} = -m_{1j},\ \forall j=1,2,\cdots,n$.
\STATE $m_{ij} = m_{1j} + (i-1)\frac{m_{nj}-m_{1j}}{n-1},   \forall i,j\ne 1$.
\STATE compute $\sigma = \argmax_{\omega\in\Sigma_n} \sum m_{i\omega(i)}$.
\end{algorithmic}
\end{algorithm}

%
%

\section{Experimental study}\label{sec::experimental_study}

In Section \ref{sec::decomposition_LOP}, we showed how an LOP can be expressed as the sum of two subproblems, one of which is P, while the other is NP-hard. The P subproblem is composed of all of the univariate information, while the NP-hard subproblem contains the rest of the information when the univariate information is subtracted. The aim of the experimental part of this study is to observe the consequences of this decomposition at a practical level. Particularly, we see how the constructives previously described behave depending on the contribution of the each of the components. For this sake, we generate pairs of matrices, $A_{(n-1,1)}$ and $A_{(n-2,1,1)}$, such that the first is associated with the P component, while the second is associated with the NP-hard component. Considering reduced values of the dimension $n$, we study how the different constructives perform when we try to solve the LOP whose input matrix is
$$ A_{(n-1,1)} + \epsilon A_{(n-2,1,1)}, $$
with different values of $\epsilon$, where $\epsilon$ represents the weight that is given to the NP-hard component.

\subsection{Experimental design}

The steps of the experiment are the following:

\begin{enumerate}
	\item Generate a pair of ``random'' input matrices $A_{(n-1,1)}$ and $A_{(n-2,1,1)}$ associated with the P and NP-hard components, respectively.
	\item Compute and store the exact solution of the LOP whose input matrix is $$ A_{(n-1,1)} + \epsilon A_{(n-2,1,1)}, $$ with different values of $\epsilon$.
	\item Apply the four constructive algorithms (Becker's, the two versions of the constructive based on row/column subtractions and the constructive based on the matrix of univariate information) to $$ A_{(n-1,1)} + \epsilon A_{(n-2,1,1)}, $$ for each $\epsilon$ fixed in Step 2.
	\item Compute the error of the solutions given by the constructives with respect to the exact solution according to the following metric:
	$$\frac{|f(\sigma)-f(\sigma_{max})|}{|f(\sigma_{max}) - f(\sigma_{min})|},$$
	where $\sigma_{max}$ is the global optimum and $\sigma_{min}$ is the global minimum.
\end{enumerate}

We run 20 repetitions of this experiment for each of the dimensions $n = 10,11$. The values of $\epsilon$ have been chosen according to a logarithmic scale and are the following:
$$\epsilon = 0, 10^{-2}, 10^{-1.75}, 10^{-1.5}, 10^{-1.25}, \cdots, 10^2, 10^{2.25}, 10^{2.5}.$$
Regarding Step 3, Becker's constructive can not be applied when the sum of all the elements of a row/column is negative. So, before applying it to any matrix $B = A_{(n-1,1)} + \epsilon A_{(n-2,1,1)}$, we subtract the minimum value of $C$ to all of the elements of $C$ in order to avoid negative values.

\subsubsection{How to generate $A_{(n-1,1)}$}

$A_{(n-1,1)} = [a_{ij}]$ is generated so that the condition of the statement of Proposition \ref{prop::LOP_necessaryConditionA_n-2_1_1_zero} is satisfied. For this sake, we introduce the matrix of differences $D$ for which $d_{ij} = a_{ij}-a_{ji}$. The steps to generate $A_{(n-1,1)}$ are the following:

\begin{enumerate}
	\item Generate a random permutation $\sigma\in\Sigma_n$.
	\item For $i = 1,2,\cdots, n-1$, do randomly one of these steps: 
		\begin{enumerate}
			\item Assign $a_{\sigma(i)\sigma(i+1)}$ and $a_{\sigma(i+1)\sigma(i)}$ a random value respectively. Then, set $d_{\sigma(i)\sigma(i+1)} = a_{\sigma(i)\sigma(i+1)} - a_{\sigma(i+1)\sigma(i)}$ and $d_{\sigma(i+1)\sigma(i)} = a_{\sigma(i+1)\sigma(i)} - a_{\sigma(i)\sigma(i+1)}$.
			\item  Assign $a_{\sigma(i)\sigma(i+1)}$ and $d_{\sigma(i)\sigma(i+1)}$ a random value respectively. Then, set $d_{\sigma(i+1)\sigma(i)} = -d_{\sigma(i)\sigma(i+1)}$ and $a_{\sigma(i+1)\sigma(i)} = d_{\sigma(i+1)\sigma(i)} + a_{\sigma(i)\sigma(i+1)}$.
			\item Assign $d_{\sigma(i)\sigma(i+1)}$ and $a_{\sigma(i+1)\sigma(i)}$ a random value. Then, set $d_{\sigma(i+1)\sigma(i)} = -d_{\sigma(i)\sigma(i+1)} $ and $a_{\sigma(i)\sigma(i+1)} = d_{\sigma(i)\sigma(i+1)} + a_{\sigma(i+1)\sigma(i)}$.
		\end{enumerate}
	\item Set the rest of the values of $D$, so that $d_{ij} + d_{jk} = d_{ik}$ and $d_{ij} = -d_{ji}$.
	\item Until $A_{(n-1,1)}$ is completely defined, choose a random couple of indices $i$ and $j$ for which $a_{ij}$ is not defined. Set $a_{ij}$ following a uniform distribution $U(-1,1)$ and $a_{ji} = d_{ji} + a_{ij}$. 
\end{enumerate}

All of the random numbers are generated following a uniform distribution $U(-1,1)$. The sense of the range of options of Step 2 is to avoid creating unnecessary structures that could bias the experiment. For example, if the first substep of Step 2 was the only choice, then, it would always happen that $a_{\sigma(i)\sigma(i+1)},a_{\sigma(i+1)\sigma(i)}\in(-1,1)$, while $d_{\sigma(i)\sigma(i+1)}, d_{\sigma(i+1)\sigma(i)}\in (-2,2)$.

\subsubsection{How to generate $A_{(n-2,1,1)}$}

$A_{(n-2,1,1)} = [a_{ij}]$ is generated so that the condition of the statement of Proposition \ref{prop::LOP_necessaryConditionA_n-1_1_zero} is satisfied. For this sake, we introduce the matrix of differences $D$ for which $d_{ij} = a_{ij}-a_{ji}$. The steps to generate $A_{(n-1,1)}$ are the following:

\begin{enumerate}
	\item Choose two random indices $i$ and $j$ such that $a_{ij}$ has not been assigned.
	\item Do randomly one of these steps:
		\begin{enumerate}
			\item Assign $a_{ij}$ and $a_{ji}$ random values. Then, set $d_{ij} = a_{ij}- a_{ji}$ and $d_{ji} = a_{ji} - a_{ij}$.
			\item Assign randomly $d_{ij}$ and $a_{ij}$. Then, set $d_{ji} = -d_{ij}$ and $a_{ji} = -a_{ij}$.
		\end{enumerate}
	\item If there is a row $i$ in $D$ such that all of the elements except one, $d_{ik}$ have been assigned, set $d_{ik} = -\sum_{j\ne k} d_{ij}$. Then, set $d_{ki} = -d_{ik}$. Then, do randomly one of these steps:
		\begin{enumerate}
			\item Assign $a_{ik}$ a random value. Then, set $a_{ki} = d_{ki} + a_{ik}$.
			\item Assign $a_{ki}$ a random value. Then, set $a_{ik} = d_{ik} + a_{ki}$.
		\end{enumerate}
	\item Repeat Step 3 until there is no row of $D$ with a single element that has not been assigned any value.
	\item Repeat all of the steps until $A_{(n-2,1,1)}$ has been assigned all of the values.
\end{enumerate}

All of the random numbers are generated following a uniform distribution $U(-1,1)$.

\subsection{Experimental results}

Table \ref{tab::results_summary} shows the mean value of the errors of the solutions found by the constructives (with respect to the optimum) after 20 repetitions. SS and S refer to the simpler and more sophisticated version of the constructive based on row/column subtractions, while CM refers to the constructive based on the matrix of univariate information.

\begin{table*}
\small
\centering
\caption{This table shows the errors of the solutions found by the constructives with respect to the optimum for dimensions $n=10,11$ and with a number of values of $\epsilon$.}\label{tab::results_summary}
 \begin{tabular}{r| r r r r | r r r r }
& \multicolumn{4}{|c |}{Errors when $n = 10$} & \multicolumn{4}{c}{Errors when $n = 11$} \\ 
\hline
\multirow{2}{*}{$\epsilon\quad$} & \multirow{2}{*}{Becker} & \multirow{2}{*}{SS} & \multirow{2}{*}{S} & \multirow{2}{*}{CM} & \multirow{2}{*}{Becker} & \multirow{2}{*}{SS} & \multirow{2}{*}{S} & \multirow{2}{*}{CM} \\
& & & & & & & &\\
\hline
& & & & & \\
0.000 & 0.002 & 0.000 & 0.000 & 0.000 & 0.001 & 0.000 & 0.000 & 0.000\\
0.010 & 0.002 & 0.000 & 0.000 & 0.000 & 0.001 & 0.000 & 0.000 & 0.000\\
0.018 & 0.002 & 0.000 & 0.000 & 0.000 & 0.001 & 0.000 & 0.000 & 0.000\\
0.032 & 0.002 & 0.000 & 0.000 & 0.000 & 0.001 & 0.000 & 0.000 & 0.000\\
0.056 & 0.003 & 0.000 & 0.000 & 0.001 & 0.002 & 0.000 & 0.000 & 0.001\\
0.100 & 0.004 & 0.002 & 0.001 & 0.003 & 0.003 & 0.002 & 0.001 & 0.003\\
0.178 & 0.007 & 0.005 & 0.002 & 0.009 & 0.007 & 0.006 & 0.005 & 0.009\\
0.316 & 0.014 & 0.014 & 0.008 & 0.022 & 0.018 & 0.018 & 0.015 & 0.024\\
0.562 & 0.026 & 0.026 & 0.013 & 0.051 & 0.042 & 0.041 & 0.027 & 0.055\\
1.000 & 0.047 & 0.040 & 0.027 & 0.104 & 0.062 & 0.062 & 0.039 & 0.109\\
1.778 & 0.080 & 0.074 & 0.051 & 0.179 & 0.079 & 0.082 & 0.059 & 0.183\\
3.162 & 0.098 & 0.091 & 0.049 & 0.262 & 0.094 & 0.092 & 0.076 & 0.270\\
5.623 & 0.107 & 0.095 & 0.066 & 0.338 & 0.108 & 0.110 & 0.083 & 0.348\\
10.000 & 0.111 & 0.102 & 0.064 & 0.397 & 0.108 & 0.114 & 0.071 & 0.408\\
17.783 & 0.107 & 0.094 & 0.071 & 0.439 & 0.091 & 0.099 & 0.062 & 0.449\\
31.623 & 0.102 & 0.097 & 0.069 & 0.467 & 0.086 & 0.090 & 0.056 & 0.475\\
56.234 & 0.092 & 0.091 & 0.067 & 0.484 & 0.083 & 0.090 & 0.051 & 0.491\\
100.000 & 0.091 & 0.089 & 0.064 & 0.494 & 0.085 & 0.093 & 0.049 & 0.500\\
177.828 & 0.092 & 0.087 & 0.071 & 0.499 & 0.081 & 0.091 & 0.047 & 0.505\\
316.228 & 0.091 & 0.086 & 0.067 & 0.503 & 0.080 & 0.084 & 0.046 & 0.508\\
\end{tabular}
\end{table*}

It can be clearly seen how, as the value of $\epsilon$ increases, the behavior of the constructives degrades, having small errors for values of $\epsilon$ close to 0, rapidly increasing when this value gets further from zero and tending to a stability for greater values of $\epsilon$. This behavior is better reflected by Figure \ref{fig::evolution_errors_n11}, where the evolution of the mean errors is represented as $\epsilon$ varies for a dimension of $n=11$. 

\begin{figure*}[htbp]
\centerline{\includegraphics[scale=0.5]{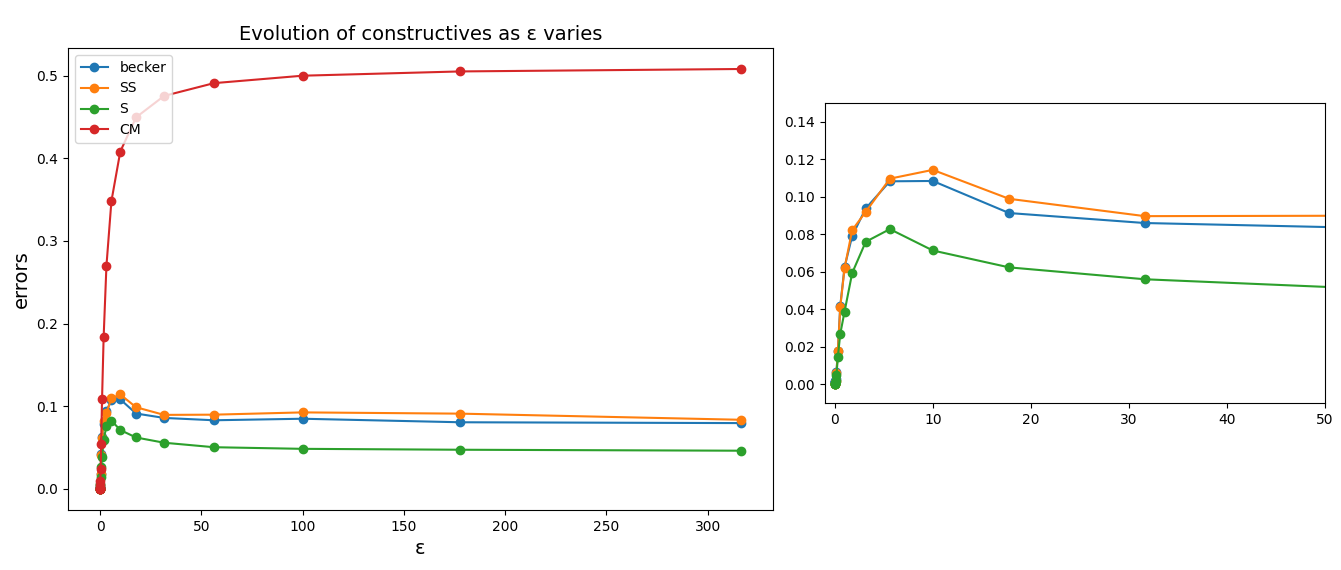}}
\caption{Evolution of the mean value of the errors when the different constructive algorithms are applied, as $\epsilon$ varies and when the dimension is $n=11$. The diagram at the right is a zoomed image of Becker's constructive and the two constructives based on row/column subtractions.}
\label{fig::evolution_errors_n11}
\end{figure*}

For the smallest values of $\epsilon$, Becker presents the worst performance, which could be explained by the following reasoning: even though Becker uses information about the ratio between row/column sums, and this is related to univariate information, this does not explicitly represent the mean values that are used in the rest of constructives. On the other hand, for bigger values of $\epsilon$, the constructive based on the matrix of mean values turns out to be the worst. This is the only constructive that fixes all of the elements $\sigma(1)$, $\sigma(2)$, $\cdots$, $\sigma(n)$ at the same time, instead of fixing them step by step. Probably the lack of flexibility, which is caused by the fact that the mean values are not recalculated after fixing each element, is the reason for this behavior. That is, the influence of the information gathered from the marginals of order 1 (univariate information) is too strong, and its performance gets worse as the weight of the other marginals increases.

%
%

\section{Conclusions and future work}\label{sec::conclusions}

In this paper, we prove, with the aid of the FT, how the LOP can be expressed as a sum of two subproblems, one of which is P, and contains all of the univariate information, while the other is NP-hard and contains no relevant univariate information at all. We experimentally study how giving different weights to the NP-hard component affects the behavior of a number of constructive algorithms that mainly take into account univariate information. In these experiments, we observe that the performance of these algorithms drastically degrades as the problem transits from P to NP-hardness, as in a phase transition.

When the NP-hard component has no effect (it is given a weight of 0), the worst performance is presented by the Becker's constructive. On the other hand, when the NP-hard component gains more influence, the last constructive that we propose is the one that works worst. This last algorithm, contrarily to the rest, which construct the solution step by step, constructs the solution having a single glance at the global information. We think that it could be interesting to combine the best characteristics of both types of heuristics, that is, taking into account the global information in order to construct the solution step by step while we keep recomputing the information at each step.   

Another interesting open question naturally arises from this study. We have seen how to define input matrices for the LOP in such a way that they are associated with the P or NP-hard components. However, it would be interesting to take the opposite path. That is, given an LOP instance defined by an input matrix $A$, can we find two matrices $B$ and $C$ such that $B$ is associated with the P component, $C$ is associate with the NP-hard component and their sum is $A$?

%
%

\section*{Acknowledgment}


This work is supported by the Basque Government (BERC 2022-2025 and IT1504-22) and by the Spanish Ministry of Economy and Competitiveness MINECO (projects PID2019-104966GB-I00 and PID2019-106453GA-I00/AEI/10.13039/501100011033). Jose A. Lozano acknowledges support by the Spanish Ministry of Science, Innovation and Universities through BCAM Severo Ochoa accreditation (SEV-2017-0718). Anne Elorza holds a predoctoral grant (ref. PIF17/293) from the University of the Basque Country.

%
%

\appendix 
\subsection{Proofs of Section \ref{sec::decomposition_LOP}}\label{app:proofs_decomposition_LOP}


\begin{proof}[Proof of Proposition \ref{prop::LOP_necessaryConditionA_n-2_1_1_zero}]
Let us consider two indices $i$ and $j$. We assume, without loss of generality that $i,j\ne 1,2, n$ and $i < j$. Then, 
\begin{align*}
a_{ij} - a_{ji} = & f([i\ j\ 1\ 2\ \cdots\ i-1\ i+1\ \cdots\ j-1\ j+1\ \cdots \ n]) \\
-&f([j\ i\ 1\ 2\ \cdots\ i-1\ i+1\ \cdots\ j-1\ j+1\ \cdots \ n]).
\end{align*}

By taking into account equation (\ref{eq::function_only_2lowestCoefficients}), this expression becomes
\begin{align*}
& a_{ij} - a_{ji} = \frac{1}{n!} \Bigg( \sum\limits_{\sigma:\sigma(1) = i} f(\sigma) + \sum\limits_{\sigma:\sigma(2) = j} f(\sigma) + \sum\limits_{\sigma:\sigma(3) = 1} f(\sigma) + \\
& \cdots + \sum\limits_{\sigma:\sigma(n) = n} f(\sigma) \Bigg) - 
\frac{1}{n!} \Bigg( \sum\limits_{\sigma:\sigma(1) = j} f(\sigma) + \sum\limits_{\sigma:\sigma(2) = i} f(\sigma) + \\ & \sum\limits_{\sigma:\sigma(3) = 1} f(\sigma) + \cdots + \sum\limits_{\sigma:\sigma(n) = n} f(\sigma) \Bigg) = 
 \frac{1}{n!} \Bigg( \sum\limits_{\sigma:\sigma(1) = i} f(\sigma) +\\& \sum\limits_{\sigma:\sigma(2) = j} f(\sigma) - \sum\limits_{\sigma:\sigma(1) = j} f(\sigma) - \sum\limits_{\sigma:\sigma(2) = i} f(\sigma) \Bigg).\\
\end{align*}

Then,
\begin{align*}
& a_{ij} - a_{ji} + a_{jk} - a_{kj} = \frac{1}{n!} \Bigg( \sum\limits_{\sigma:\sigma(1) = i} f(\sigma) + \sum\limits_{\sigma:\sigma(2) = j} f(\sigma) - \\& \sum\limits_{\sigma:\sigma(1) = j} f(\sigma) - \sum\limits_{\sigma:\sigma(2) = i} f(\sigma) \Bigg) + 
 \frac{1}{n!} \Bigg( \sum\limits_{\sigma:\sigma(1) = j} f(\sigma) + \\& \sum\limits_{\sigma:\sigma(2) = k} f(\sigma) - \sum\limits_{\sigma:\sigma(1) = k} f(\sigma) - \sum\limits_{\sigma:\sigma(2) = j} f(\sigma) \Bigg) = \\
& \frac{1}{n!} \Bigg( \sum\limits_{\sigma:\sigma(1) = i} f(\sigma) + \sum\limits_{\sigma:\sigma(2) = k} f(\sigma) - \sum\limits_{\sigma:\sigma(1) = k} f(\sigma) - \\& \sum\limits_{\sigma:\sigma(2) = i} f(\sigma) \Bigg) = 
 a_{ik} - a_{ki}.
\end{align*}

\end{proof}

\begin{proof}[Proof of Proposition \ref{prop:LOP_n-2_1_1_zero_swap}]

Taking into account that 
\small
$$ \sigma = [\sigma(1)\ \cdots \sigma(i-1)\ \sigma(i)\ \sigma(i+1)\ \cdots \sigma(j-1)\ \sigma(j)\ \sigma(j+1)\ \cdots \sigma(n)]$$
\normalsize and \small
\begin{align*}
&\sigma_{ij}\circ\sigma = \\&[\sigma(1)\ \cdots \sigma(i-1)\ \sigma(j)\ \sigma(i+1)\ \cdots \sigma(j-1)\ \sigma(i)\ \sigma(j+1)\ \cdots \sigma(n)],
\end{align*}
\normalsize evaluating $f(\sigma)$ and $f(\sigma_{ij}\circ\sigma)$, it can be seen that

\begin{align*}
	& f(\sigma) - f(\sigma_{ij}\circ\sigma) = 
	a_{\sigma(i)\sigma(i+1)} + a_{\sigma(i)\sigma(i+2)} + \cdots + \\& a_{\sigma(i)\sigma(j-1)} + a_{\sigma(i)\sigma(j)} + a_{\sigma(i+1)\sigma(j)} + a_{\sigma(i+2)\sigma(j)} + \cdots + \\& 
	a_{\sigma(j-1)\sigma(j)} - ( a_{\sigma(j)\sigma(i+1)} + a_{\sigma(j)\sigma(i+2)} + \cdots + a_{\sigma(j)\sigma(j-1)} + \\& a_{\sigma(j)\sigma(i)} + a_{\sigma(i+1)\sigma(i)} + 
	a_{\sigma(i+2)\sigma(i)} + \cdots + a_{\sigma(j-1)\sigma(i)}  ). 
\end{align*}

In this expression, $a_{\sigma(i)\sigma(k)}$ and $a_{\sigma(k)\sigma(j)}$ appear exactly once, for any $k=i+1, i+2, \cdots, j-1$. Taking into account Proposition \ref{prop::LOP_necessaryConditionA_n-2_1_1_zero}, $a_{\sigma(i)\sigma(k)} + a_{\sigma(k)\sigma(j)} = a_{\sigma(i)\sigma(j)}$. Therefore, the expression above can be expressed as follows:
\begin{align*}
	& a_{\sigma(i)\sigma(j)} + a_{\sigma(i)\sigma(j)} + \cdots + a_{\sigma(i)\sigma(j)} - ( a_{\sigma(j)\sigma(i+1)} + \\& a_{\sigma(j)\sigma(i+2)} + \cdots + a_{\sigma(j)\sigma(j-1)} +  
	 a_{\sigma(j)\sigma(i)} + a_{\sigma(i+1)\sigma(i)} + \\& a_{\sigma(i+2)\sigma(i)} + \cdots + a_{\sigma(j-1)\sigma(i)}  ) = (j-i)a_{\sigma(i)\sigma(j)} - \\& ( a_{\sigma(j)\sigma(i+1)} + 
	 a_{\sigma(j)\sigma(i+2)} + \cdots + a_{\sigma(j)\sigma(j-1)} + a_{\sigma(j)\sigma(i)} + \\& a_{\sigma(i+1)\sigma(i)} + a_{\sigma(i+2)\sigma(i)} + \cdots + a_{\sigma(j-1)\sigma(i)}  ). 
\end{align*}
On the other hand, in the part of this expression that is between the parenthesis, $a_{\sigma(j)\sigma(k)}$ and $a_{\sigma(k)\sigma(i)}$ appear exactly once for $k=i+1, i+2, \cdots, j-1$. Therefore, the expression can be simplified as follows:
\begin{align*}
	& f(\sigma) - f(\sigma_{ij}\circ\sigma) = 
	 (j-i)a_{\sigma(i)\sigma(j)} - ( a_{\sigma(j)\sigma(i)} + \\& a_{\sigma(j)\sigma(i)} + \cdots + a_{\sigma(j)\sigma(i)} ) = 
	 (j-i)a_{\sigma(i)\sigma(j)} - \\& (j-i)a_{\sigma(i)\sigma(j)} = (j-i)(a_{\sigma(i)\sigma(j)} - a_{\sigma(j)\sigma(i)}). 
\end{align*}

\end{proof}


\begin{proof}[Proof of Proposition \ref{prop::LOP_necessaryConditionA_n-1_1_zero}]

Assume that $f$ is an LOP function such that $\hat{f}_{(n-1,1)} = 0$. Consider the following ``marginals'':
$$ \sum\limits_{\sigma:\sigma(1) = i} f(\sigma) \qquad\text{and}\qquad \sum\limits_{\sigma:\sigma(n) = i} f(\sigma).$$

According to equation (\ref{eq::f_st_n-1_1_coeff0}),
$$ \sum\limits_{\sigma:\sigma(1) = i} f(\sigma) = \sum\limits_{\sigma:\sigma(n) = i} f(\sigma),$$
then,
\begin{align*}
 & (n-1)!(a_{i2} + a_{i3} + \cdots a_{in}) + \frac{(n-1)!}{2}\left(\sum\limits_{k,l\ne 1} a_{kl} \right) = \\& (n-1)!(a_{2i} + a_{3i} + \cdots a_{ni}) + \frac{(n-1)!}{2}\left(\sum\limits_{k,l\ne 1} a_{kl} \right).
\end{align*}
Therefore,
\begin{align*}
& (n-1)!(a_{i2} + a_{i3} + \cdots a_{in}) = \\& (n-1)!(a_{2i} + a_{3i} + \cdots a_{ni}).
\end{align*}
And, finally,
$$ a_{i2} + a_{i3} + \cdots + a_{in} - (a_{2i} + a_{3i} + \cdots + a_{ni}) = 0.$$
\end{proof}

\begin{proof}[Proof of Lemma \ref{lem::LOP_equalities_cyclicTransformation}]
Let $A$ be an input matrix that generates $f$ and that satisfies the property of Proposition \ref{prop::LOP_necessaryConditionA_n-1_1_zero}.
\begin{align*}
f([i_1\ i_2\ i_3\ \cdots i_n]) = & a_{i_1i_2} + a_{i_1i_3} + \cdots + a_{i_1i_n} + \\
	                         & a_{i_2i_3} + a_{i_2i_4} + \cdots + a_{i_2i_n} + \\
	                         & \cdots +\\
	                         & a_{i_{n-1}i_n} = \\
	                         & a_{i_1i_2} + a_{i_1i_3} + \cdots + a_{i_1i_{n-1}} + \\
	                         & a_{i_2i_3} + a_{i_2i_4} + \cdots + a_{i_2i_{n-1}} + \\
	                         & \cdots +\\
	                         & a_{i_{n-2}i_{n-1}} +\\
	                         & a_{i_1 i_n} + a_{i_2 i_n} + \cdots + a_{i_{n-1}i_n}.
\end{align*}

\begin{align*}
f([i_n\ i_1\ i_2\ \cdots i_{n-1}]) = & a_{i_ni_1} + a_{i_ni_2} + \cdots + a_{i_ni_{n-1}} + \\
				  & a_{i_1i_2} + a_{i_1i_3} + \cdots + a_{i_1i_{n-1}} + \\
	                         & \cdots +\\
	                         & a_{i_{n-2}i_{n-1}}.
\end{align*}

Then,
\begin{align*}
& f([i_1\ i_2\ i_3\ \cdots i_n]) - f([i_n\ i_1\ i_2\ \cdots i_{n-1}]) = a_{i_1 i_n} + a_{i_2 i_n} + \\&  \cdots +a_{i_{n-1}i_n} - (a_{i_ni_1} + a_{i_ni_2} + \cdots + a_{i_ni_{n-1}}),
\end{align*}
which, according to Proposition \ref{prop::LOP_necessaryConditionA_n-1_1_zero}, is 0.
\end{proof}

\begin{proof}[Proof of Proposition \ref{prop::LOP_n-1_1_zero_optimum_reduction}]
Let $\sigma= [i_1\ i_2\ \cdots i_{n-1}\ n]$ be a global optimum of $f_{A'}$ (taking into account Proposition \ref{lem::LOP_equalities_cyclicTransformation}, there always exists a global optimum $\sigma$ such that $\sigma(n) = n$). Then,
$$f_{A'}([i_1\ i_2\ \cdots\ i_{n-1}\ n]) \ge f_{A'}([j_1\ j_2\ \cdots\ j_{n-1}\ n]).$$

By developing both expressions, as in the proof of Lemma \ref{lem::LOP_equalities_cyclicTransformation}, the following inequality is obtained:
\begin{align*}
 & a_{i_1i_2} + a_{i_1i_2} + \cdots + a_{i_1i_{n-1}} + a_{i_2i_3} + \cdots + a_{i_2i_{n-1}} + \cdots + \\ & a_{i_{n-2}i_{n-1}} \ge 
  a_{j_1j_2} + a_{j_1j_2} + \cdots + a_{j_1j_{n-1}} + a_{j_2j_3} + \cdots + \\& a_{j_2j_{n-1}} + \cdots + a_{j_{n-2}j_{n-1}}. 
\end{align*}

This is equivalent to 
$$f_A([i_1\ i_2\ \cdots\ i_{n-1}]) \ge f_A([j_1\ j_2\ \cdots\ j_{n-1}]).$$

Since this happens for any permutation $[j_1\ j_2\ \cdots j_{n-1}]$, then $[i_1\ i_2\ \cdots\ i_{n-1}]$ is a global optimum.\\

To prove the reciprocal implication, assume that $[i_1\ i_2\ \cdots\ i_{n-1}]$ is a global optimum of $f_A$. Then, for any permutation $[j_1\ j_2\ \cdots\ j_{n-1}]\in\Sigma_{n-1}$,
$$ f_A([i_1\ i_2\ \cdots\ i_{n-1}]) \ge f_A([j_1\ j_2\ \cdots\ j_{n-1}]).$$

Then, if both objective function values are developed it is immediate that
$$f_{A'}([i_1\ i_2\ \cdots\ i_{n-1}\ n]) \ge f_{A'}([j_1\ j_2\ \cdots\ j_{n-1}\ n]).$$

Taking into account Lemma \ref{lem::LOP_equalities_cyclicTransformation},
\begin{align*}
 & f_{A'}([i_1\ i_2\ \cdots\ i_{n-1}\ n]) \ge f_{A'}([j_1\ j_2\ \cdots\ j_{n-1}\ n]) = \\& f_{A'}([j_2\ j_3\ \cdots\ j_{n-1}\ n\ j_1]) = f_{A'}([j_3\ j_4\ \cdots\ j_{n-1}\ n\ j_1\ j_2]) = \\& \cdots = f_{A'}([n\ j_1\ j_2\ j_3\ \cdots\ j_{n-1}]).
\end{align*}

Since the right side of the inequality contains any possible permutation,
$$f_{A'}([i_1\ i_2\ \cdots i_{n-1}\ n])\ge f_{A'}(\omega),\quad \forall\omega\in\Sigma_n.$$
Then, $[i_1\ i_2\ \cdots i_{n-1}\ n]$ is a global optimum of $f_{A'}$

\end{proof}

%
%

\bibliographystyle{IEEEtran}
\bibliography{IEEEabrv, bib_CEC2022}


\end{document}